\newtheorem{theorem}{Theorem}[section]
\newtheorem{corollary}[theorem]{Corollary}
\newtheorem{lemma}[theorem]{Lemma}
\newtheorem{proposition}[theorem]{Proposition}
\newtheorem{assumption}[theorem]{Assumption}
\theoremstyle{definition}
\newtheorem{definition}[theorem]{Definition}
\theoremstyle{remark}
\newtheorem{remark}[theorem]{\textbf{Remark}}
\newtheorem{example}[theorem]{Example}
\numberwithin{equation}{section}
\newcommand{\E}{\mathbb{E}}
\renewcommand{\P}{\mathbb{P}}
\newcommand{\Q}{\mathbb{Q}}
\newcommand{\di}{\mathrm{d}}
\newcommand{\half}{\dfrac{1}{2}}
\newcommand{\R}{\mathbb{R}}
\newcommand{\as}{a.s.}
\newcommand{\ie}{i.e.}
\newcommand{\eg}{e.g.}
\newcommand{\dx}{\mathrm{d}x}
\newcommand{\dt}{\mathrm{d}t}
\newcommand{\F}{\mathcal{F}}
\newcommand{\Fc}{\mathcal{F}}
\newcommand{\eps}{\varepsilon}
\newcommand{\indic}[1]{\boldsymbol{1}_{\{\ensuremath{#1}\}}}
\newcommand{\ind}{\boldsymbol{1}}
\newcommand{\me}{\mathrm{e}}
\newcommand{\Ep}[1]{\E\left[#1\right]}
\newcommand{\Eq}[1]{\E^{\Q}\left[#1\right]}
\newcommand{\Ebq}[1]{\E^{\bQ}\left[#1\right]}
\newcommand{\Edot}[2]{\E^{#1}\left[#2\right]}
\newcommand{\Nb}{\mathbb{N}}
\newcommand{\Pc}{\mathcal{P}}
\newcommand{\Pcc}{\Pc^{\circ}}
\newcommand{\Qc}{\mathcal{Q}}
\newcommand{\Vc}{\mathcal{V}}
\newcommand{\cadlag}{c\`adl\`ag}
\newcommand{\ol}{\overline}
\newcommand{\ul}{\underline}
\definecolor{orange}{rgb}{1,0.3,0.2}
\newcommand{\tC}{\tilde{C}} 
\newcommand{\In}{\mathcal{I}} 
\newcommand{\Inc}{\In^{\circ}} 
\newcommand{\OT}{\mathsf{OT}}
\newcommand{\BS}{\mathsf{BS}}
\newcommand{\bQ}{\ol{\mathbb{Q}}}
\title[Utility maximisation with model-independent constraints]{Utility maximisation with model-independent constraints}
\author{Alexander M.~G.~Cox}
\thanks{Alexander M.~G.~Cox, Department of Mathematical
  Sciences, University of Bath, Bath, U.~K..\\ e-mail: \texttt{a.m.g.cox@bath.ac.uk}, web: \texttt{http://www.maths.bath.ac.uk/$\sim$mapamgc/}}
\author{Daniel Hernandez-Hernandez}
\thanks{Daniel Hern\'andez, Department of Probability and Statistics, Research Center for 
Mathematics (CIMAT), Mexico. \\ e-mail: \texttt{dher@cimat.mx}, web: \texttt{http://www.cimat.mx/$\sim$dher/}}
\thanks{}
\keywords{Model-independent constraints, trading restrictions, max-plus decomposition, utility maximization, intrinsic value} 
\date{\today}
\begin{document}


\begin{asydef}
usepackage("amsmath");
texpreamble("\newcommand{\ul}[1]{\underline{#1}}\newcommand{\ol}[1]{\overline{#1}}\newcommand{\indic}[1]{\boldsymbol{1}_{\{\ensuremath{#1}\}}}\newcommand{\Rt}{\overline{R}}\newcommand{\Rs}{\underline{R}}\newcommand{\BS}{\mathsf{BS}}");

import graph;

real bs(real s, real K, real T, real r, real sigma)
{
  real N(real x)
  {
    return 1/2+erf(x/sqrt(2))/2;
  }

  real d1 = (sigma>0)?(log(s)-log(K) + (r + sigma^2/2)*T)/(sigma*sqrt(T)):10^6*sgn(log(s)-log(K)+r*T);
  real d2 = d1 - sigma*sqrt(T);
  
  return s*N(d1) - N(d2)*K*exp(-r*T);
}
\end{asydef}

\begin{abstract}
We consider an agent who has access to a financial market, including derivative contracts, who looks to maximise her utility. Whilst the agent looks to maximise utility over one probability measure, or class of probability measures, she must also ensure that the mark-to-market value of her portfolio remains above a given threshold. When the mark-to-market value is based on a more pessimistic valuation method, such as model-independent bounds, we recover a novel optimisation problem for the agent where the agents investment problem must satisfy a pathwise constraint.

For complete markets, the expression of the optimal terminal wealth is given, using the max-plus decomposition for supermartingales.  Moreover, for the Black-Scholes-Merton model the explicit form of the process involved in such decomposition is obtained, and we are able to investigate numerically optimal portfolios in the presence of options which are mispriced according to the agent's beliefs.
\end{abstract}
\maketitle

\section{Introduction}\label{sec:Introduction}

In this paper we  consider a utility maximisation problem for an agent who has some modelling beliefs, according to which the agent will aim to maximise her utility, but also some constraints which are based on model-independent considerations. Our basic setting is that the agent assumes they will observe only `possible' paths according to their beliefs, and they will pursue a utility maximisation objective corresponding to their beliefs. We importantly include in our setting both trading in an underlying risky asset, as well as in illiquid derivatives, whose initial price and payoff are known, but no assumptions about the intermediate value can be made. The agent is also being observed by a manager or regulator who does not share the agent's modelling assumptions, but rather uses other (typically more pessimistic) assumptions. The manager will intervene if their valuation of the agent's portfolio goes below some given threshold, and the agent will act to avoid this scenario. Note that many real-world trading strategies are subject to related constraints, for example Interactive Brokers, an electronic trading platform, base customer margin requirements on a \emph{portfolio margin} basis, which they state is `determined using a ``risk-based'' pricing model that calculates the largest potential loss of all positions in a product class or group across a range of underlying prices and volatilities', \cite{InteractiveBrokers:2025}.\footnote{We are grateful to Paulo Guasoni for pointing this out to us.} 

Under these modelling assumptions, our aim will be to determine the agent's optimal trading strategy when they are able to take (static) positions in certain options, for example, call options, or other simple derivatives. In the context of some of these options, we will use the notion of an ``intrinsic'' value of an option, which we think of as the worst-case valuation of the option or portfolio of options, for example (in the absence of interest rates) the intrinsic vale of a long position in a call option with maturity $T$ and strike $K$ at time $t< T$ is $(S_t-K)_+$, since this can be realised through taking model-independent positions in the underlying asset.

Our approach borrows from the literature model-independent or robust pricing and hedging (see e.g.~\cite{Hobson:1998aa,Cox:2011aa,galichon_stochastic_2014,beiglbock_model-independent_2013,dolinsky_super-replication_2018,Dolinsky:2014aa,beiglbock_pathwise_2017,cheridito_martingale_2021,cox_roots_2013,cox_model-independent_2017}). We also use classical results from the theory of utility maximisation in complete markets. Our approach to handling pathwise constraints is heavily inspired by the papers of~\cite{El-Karoui:2006aa,El-Karoui:2008aa}, see also \cite{bank_stochastic_2021,bank_stochastic_2004}.

\subsection{Basic Problem Formulation}

We consider a market on the time interval $[0,T]$, and we suppose that an asset price $(S_t)_{t \in [0,T]} \in C([0,T])$ is observed, where for the moment we suppose all prices are given in discounted units. We suppose the agent believes that there is a class of probability measures $\Pc$, and the agent aims to find
\begin{equation}\label{eq:MainProb}
  \sup_{ \pi} \inf_{ \P \in \Pc} \Edot{\P}{U\left(w_0 + X_T^\pi\right)}, 
\end{equation}
where $U$ is a utility function, $w_0$ the initial wealth of the agent, and $X_T^{\pi}$ the trading gains of the agent given they follow the trading strategy $\pi$. 

More generally, we suppose that at time zero, the trader observes additional market information in the form of the prices of other traded derivatives. For example, the agent may observe the prices of call options given by $\tC(K) = \int (x-K)_+ \, \mu(\dx)$ for some probability measure $\mu$ (given by the Breeden-Litzenberger formula,~\cite{breeden_prices_1978}). In this case, the trader may purchase a portfolio of call options with payoff $h(S_T)$ for price $\int h(x) \, \mu(\dx)$ at time 0.

In our setup, we will impose a `model-independent' restriction on the trader's behavior by assuming a specific budget constraint. This constraint will occur when the trader's portfolio contains derivatives. Our underlying assumption is that, even though the trader may evaluate the `correct' price of the derivative in their model, they are subject to portfolio constraints imposed by a manager or regulator who is much more risk averse, and who values their derivatives using a more conservative set of pricing rules. We will typically call the valuation of the manager or regulator the \emph{intrinsic value} of the derivatives. The canonical example of such a derivative is a call option, where the intrinsic value of the derivative is the `zero-volatility' payoff of the option, or the terminal value of the call option if the asset grows at the (deterministic) interest rate $D_{\cdot}^{-1}$.

Of crucial interest to us is that these intrinsic values are in general non-linear, and so choosing to purchase different portfolios of derivatives will have a complex effect on the terminal wealth of the investor, and hence on the optimal investment strategies of the investor. Our usual setup in this paper will be the case where the intrinsic value of the derivative corresponds to the model-independent sub-replication price of the derivative, and in many examples we are able to specifically identify this quantity in terms of the underlying contract.

For example in the case where the agent may purchase a portfolio of calls with payoff $h(S_T)$ then the \emph{intrinsic value} of the portfolio at time $t$ will be $h^*(S_t)$, where $h^*$ is the greatest convex minorant (on $[0,\infty)$) of the function $h$. We think of the intrinsic value at time $t$, which we write $\In_t(S_T) = h^*(S_t)$, as the minimum value of the portfolio which can be guaranteed under \emph{any} possible model. For example, if $h(x) = (x-K)_+$, then $h^*(x) = h(x) = (x-K)_+$, and we confirm that this amount may be realised at time $t$ through the trading strategy which (if $S_t >K$) short sells the asset until either the asset drops below $K$, or time $T$, whichever is earlier. If we write $H^t_K:= \inf \{r \ge t: S_r \le K\}$, then the value of this portfolio at maturity is:
\begin{align*}
  (S_T-K)_+ + (S_t-S_{T \wedge H^t_K}) 
  & = 
    \begin{cases}
      (S_T - K)_+ + (S_t - K) & H^t_K \le T\\
      (S_T - K) + (S_t-S_T) & H^t_K > T
    \end{cases}
  \\
  & \ge S_t - K.
\end{align*}
By considering the model where the asset price remains constant, so that the price under this model is equal to the intrinsic value, we conclude that this is the best we can do. Note, in particular, that with this trading strategy the trader's wealth (including the intrinsic value) at time $u>t$ is always at least $S_u$: that is,
\begin{equation*}
  \In_u((S_T-K)_+) + (S_t-S_{u \wedge H^t_K}) \ge S_u.
\end{equation*}

The constraint we impose on our trader is that the trader's portfolio must satisfy an admissibility constraint, based on the intrinsic value of the derivatives. Specifically, we require the trader's (intrinsic) wealth at every time $t$ to satisfy:
\begin{equation}\label{eq:Wdefn}
  W_t^{\pi,h} := D_t^{-1}(w_0  - \int h(x) \, \mu(\dx))+ \In_t(h(S_T)) + X_t^{\pi} \ge -\alpha.
\end{equation}
The quantity $\alpha$ represents a lower bound imposed on the trader's portfolio value, which is required to be observed at all times. We will call a wealth process which satisfies \eqref{eq:Wdefn} \emph{$\alpha$-admissible}.

Our intuition is as follows: the trader will follow her trading strategy $\pi$ in a manner that maintains \eqref{eq:Wdefn} at all times. Under any probability measure $\P \in \Pc$, this will result in a portfolio which satisfies the constraint. If the real path of the asset does not follow a path which is compatible with any $\P \in \Pc$, then the trader can monitor her wealth, and at the first time the wealth goes below $-\alpha$, then the trader will stop dynamic trading, and simply follow the simple strategy which realises (at worst) the intrinsic value of the asset. Combined with the intrinsic value of the portfolio, this strategy always ensures that the portfolio's value remains above the lower bound.

Alternatively, one could tell the story from the perspective of a trader who is being monitored by a manager or regulator. The manager is conservative, and will look at the agent's gains from trade continuously, evaluating their derivative portfolio using a stated, model-independent rule. If the trader's \emph{intrinsic} wealth goes below the level $-\alpha$, the manager will fire the trader and close out the position with a resulting loss bounded below by $\alpha$. As a result, the trader wishes to pursue a strategy which does not result in their dismissal, and hence looks to find a strategy which stays above the intrinsic wealth constraint with probability one (under any model that they believe is possible).

Our results will take two different forms: first we will consider cases where the trader is able to trade dynamically to exploit mispricings, and guarantee a profit under certain conditions on the admissibility level $\alpha$; these results will be in a similar spirit to classical model-independent pricing constraints on traded option prices. Second, we will consider utility maximisation problems, where the trader aims to maximise their utility from terminal wealth under the additional constraint that their wealth process is admissible. In this case, we will examine the impact of different choices of derivative portfolios, and will give concrete conclusions about the optimal strategies that should be employed by the investor when faced with various traded options on the market.

\begin{remark}
  We note that the notion of intrinsic value introduced above is fairly flexible. For example, above we defined the intrinsic value to be the convex minorant on $[0,\infty)$, which corresponds to the case where it is not believed that the asset price can go negative. However, it is possible also to consider the intrinsic value assuming that asset prices can go below zero (e.g. in the Bachelier model). In this case, it might not be sufficient to consider the convex minorant on $[0,\infty)$, but rather to look at the minorant on $\R$. In some cases, this would give different values of the intrinsic process. More generally, intrinsic values arising from robust pricing bounds (i.e. over a class of models) as opposed to model-independent (over all models) bounds are also natural to consider.
\end{remark}

This paper is organized as follows: Section 2 introduces the notion of intrinsic value of derivative contracts  in terms of a subreplicating submartingale, providing examples for specific
cases.   The robust optimization problem of maximizing  expected utility of terminal wealth subject to model independent  intrinsic budget restrictions is presented in Section 3, together with some implications in the trader's behavior. Specific results are obtained for the Black-Scholes-Merton model. In Section 4 optimal trading strategies are obtained under the assumption of completeness of the market, with the help  of representations of supermartingales.

\section{Intrinsic Valuation of Derivatives and Trading}

We define here our basic market setup. We suppose that there is an underlying asset price process $(S_{t})_{t \in [0,T]}$ which takes values in $\Omega: = C_{s_0}([0,T])$, the set of continuous paths $\omega$ on $[0,T]$ with $\omega(0) = s_0$, and where we equip $\Omega$ with the uniform norm, under which topology $\Omega$ is a Polish space. This space is endowed with the Borel $\sigma-$algebra $\mathbb{F}$. Our agent believes that the underlying dynamics of $S$ are governed by a probability measure $\P \in \mathcal{P}$, for some class $\mathcal{P}$ of probability measures on $\Omega$. We will typically be interested in statements which hold $\P$-\as{} for all $\P \in \Pc$, which we will write as $\Pc$-\as{}. It is natural therefore to introduce $\mathcal{N}(\mathcal{P}):= \{A \in \mathbb{F}: \P(A) = 0 \ \forall \P \in \Pc\}$. We also introduce the set $\Qc(\Pc)$, the set of martingale measures which are equivalent to some $\P \in \Pc$. The natural filtration generated by $S$ is denoted by $\Fc^S=\{\F_t^S\}$. We also need the filtration $\Fc=\{\F_t;\;t\in[0,T]\}$, with $\F_t:=\F^S_{t^+}=\cap_{s>t} \F_s^S$ for $t<T$ and $\F_T:=\F_T^S$, which is the minimal filtration associated to the process $S$ satisfying the usual conditions, i.e. $\{\F_t;\;t\in[0,T]\}$ is an increasing right continuous family of $\sigma-$fields and completeness with respect to $\mathcal{P}$, by which we mean that $\mathcal{N}(\Pc) \subset \F_0$.
We denote by $\Lambda$ a  Meyer $\sigma-$field which contains the predictable $\sigma-$field with respect to $\F$, that is, the  $\sigma-$field generated by $\F-$adapted, left-continuous processes, and which in turn is contained in the optional $\sigma-$field with respect to the filtration $\F$. 

In addition to the risky asset, we suppose there exists a bank account which pays a deterministic (although not necessarily constant) interest rate. We write $D_t$ for the discount factor, so the time-0 value of $\$1$ at time $t$ is $D_t$, or equivalently, $\$1$ invested at time $0$ will be worth $\$D_t^{-1}$ at time $t$. We assume then that $D_t$ is decreasing, continuous  and $D_0 = 1$.

We also associate with our setup   trading strategies $\pi$ with respect to the filtration $\F$. In this paper we do not wish to directly address the specific technicalities of possible trading strategies under model-uncertainty, but refer readers to the large and growing literature for various approaches (\eg{} \cite{Dolinsky:2014aa, beiglbock_pathwise_2017, biagini_robust_2017, hou_robust_2018, dolinsky_martingale_2015,cheridito_martingale_2021} among others). The details here will not be important, so we will generally either work in the case where $\Pc$ is a singleton, and classical results are applicable, or else in the case where $\Pc$ is large, and then we will only need to consider very simple trading strategies; see Remark \ref{Rem:Est2.6} below.

Note in particular, that these two cases are essentially the main ones of interest. For example, \cite{dolinsky_super-replication_2018} show that a robust hedging for a large class of stochastic volatility models essentially reduces to the case where $\Pc$ contains all martingale measures.

We also consider the special `classical' case of model-independent pricing, where
\begin{equation}\label{def:P0}
  \Pcc := \{ \P: D S \text{ is a non-negative, uniformly integrable martingale}\}.
\end{equation}
This will give rise to our canonical notion of intrinsic value, but other choices will also be possible.

\subsection{Intrinsic Value of Derivative Contracts}

We consider the {\it intrinsic} valuation of a derivative contract:

\begin{definition} \label{def:intrinsic}
  A \emph{derivative contract} is a measurable function $C_T: \Omega \to \R$. We say that $\In_{\cdot}(C_T)$ is a \emph{fair intrinsic value} of a derivative contract $C_T$ corresponding to the class $\Pc$ of probability measures at time $t$, if:
  \begin{enumerate}
  \item $D_t\In_t(C_T)$ is a \cadlag{} $\Q$-$\F$-submartingale for all $\Q \in \Qc(\Pc)$;
  \item $\In_T(C_T) = C_T$ $\Pc$-\as{}.
  \end{enumerate}
\end{definition}

As an interesting fact, of course, $D_t\In_t(C_T) \le \Eq{D_T C_T|\Fc_t}$, for all $\Q \in \Qc(\Pc)$, where we are assuming  throughout that $D_T C_T$ satisfy implicitly the required integrability conditions in order that the conditional expectation is well defined. On the other hand, in general, we would hope to find a maximal version of the fair intrinsic price for a given set $\mathcal{P}$, which one could define (except for non-trivial measurability issues!) to be the price of the most expensive model-independent sub-replicating strategy. That is:
\begin{equation}\label{eq:6}
 D_t \In_t(C_T) := \sup \{ x \in \R: \exists (\pi) \text{ s.t. } x + \int_t^T \pi_r \, \di (D_r S_r ) \le D_T C_T \quad \Pc-\as{}\}.
\end{equation}
Such problems have been considered recently in discrete time (\cite{beiglbock_model-independent_2013,bouchard_consistent_2016,aksamit_robust_2020}) and continuous time (\cite{Dolinsky:2014aa, beiglbock_pathwise_2017, biagini_robust_2017, hou_robust_2018, dolinsky_martingale_2015,cheridito_martingale_2021}), but defining this process in general in continuous time is a non-trivial technical exercise. For the majority of this paper, our aim will be to consider easily specified intrinsic value processes, but we emphasise that in our setup the chosen fair intrinsic value is a part of the modelling framework, and not necessarily a given quantity.

Note in particular that we do not expect the intrinsic price $\In_t$ to be linear: we do not typically expect for example $\In_t(\beta C_T) = \beta \In_t(C_T)$ if $\beta<0$. However, the intrinsic price will generally be positive homogenous, with $\In_t(\beta C_T) = \beta \In_t(C_T)$ if $\beta \ge 0$.

\begin{example} \label{ex:Intrinsic}
  In the case where $\Pc = \Pcc$, see (\ref{def:P0}), and $D_t \equiv 1$, we can give concrete examples of an intrinsic value process which does in fact satisfy \eqref{eq:6}. Since many of our examples will be based on this specific choice of the intrinsic price process, we denote this specific operator by $\Inc$.
  \begin{enumerate}
  \item\label{item:1} If $h:\R_+ \to \R$, then $\Inc_t(h(S_T)) = h^*(S_t)\ind_{t<T}
    + h(S_T)\ind_{t=T}$, where $h^*(x)$ is the greatest convex minorant of $h$ (on $\R_+$). For example, $h^*((x-k)_+) = (x-k)_+$, but $h^*(-(x-k)_+) = -x$.

    To see this, we first observe that this intrinsic price satisfies all of the conditions of Definition~\ref{def:intrinsic}. Second, to see that this is the greatest such price, observe that if $h^*(x)$ is the greatest convex minorant of $h$, for $x \in (0,\infty)$ we can find $y_n \le x \le z_n$ such that \[h^*(x) = \lim_{n \to \infty} \left[ \frac{(z_n-x) h(y_n) + (x-y_n)h(z_n)}{z_n-y_n}\right].\]
Now consider the model which is a (uniformly integrable, continuous) martingale, which runs from $x$ at time $t<T$, to either $z_n$ or $y_n$ at time $T$. Then the fair price of the derivative under this model is exactly $ \left[ \frac{(z_n-x) h(y_n) + (x-y_n)h(z_n)}{z_n-y_n}\right]$, and the claim follows. In the presence of non-zero interest rates, it is easy to see by a similar argument that $\Inc_t(h(S_T)) = D_t^{-1} D_Th^*( D_T^{-1} D_tS_t)\ind_{t<T}
    + h(S_T)\ind_{t=T}$.
  \item\label{item:2} If $0 < T' < T$ then $\In_t(h(S_{T'})) = h^*(S_t) \ind_{t < T'} + h(S_{T'}) \ind_{t \ge T'}$. This is essentially the same argument as  in  \emph{(i)}.
  \item\label{item:3} If $B >0$ is a fixed barrier, we can consider the
    \emph{one-touch} option, $\OT^B_T := \ind_{S_T^* \ge B}$, where
    $S_t^* = \sup_{r \le t} S_r$ is the maximum process. In
    particular, it can be checked that one has $\In_t(\OT_T^B) =
    \ind_{S_t^* \ge B}$, while $\In_t(-\OT_T^B) = -\frac{S_t}{B}
    \ind_{S_t^* < B, t<T} - \ind_{S_t^* \ge B}$.

    In fact, in the last case, we can extend some of these ideas. For example, let $K<B$. Then we have for $t \le T$
    \begin{equation*}
      \In_t\left(\frac{(S_T-K)_+}{B-K} - \OT_T^B\right) =
      \begin{cases}
        0, &\quad \text{ if } S_t^* < B\\
        \frac{1}{B-K} \left[ (S_t - S_{H_B}) + (K-S_t)_+\right], & \quad \text{ if } S_t^* \ge B
      \end{cases},
    \end{equation*}
    where we write $H_B := \inf \{ t \ge 0: S_t = B\}$.  This is a consequence of the hedge for one-touch options given by \cite{Hobson:1998aa}.

    We note that the amount $(S_t - S_{H_B})$ is easily constructed from an adapted trading strategy (buy one unit of the asset if it hits $B$ before $T$). Including this additional trading, the intrinsic value of the combined position is then simply $(K-S_t)_+ \ind_{H_B \le t}$.
  \end{enumerate}
\end{example}


\subsection{Dynamic Trading Strategies}

We also wish to consider the class of dynamic trading strategies which are available to our agent for investment. Typically we would expect these to be specified as part of the modelling assumptions, and could depend on the choice of $\Pc$. For example, if $\Pc$ is a singleton, one may be able to use the standard stochastic integral, while for more complex choices of $\Pc$, one needs to be more careful to admit a measurable choice of the resulting trading strategy.

\begin{definition}
  To each choice of \emph{dynamic trading strategy} $\pi$, we associate a corresponding \emph{gains process} $X_\cdot^\pi: \Omega \to C_0([0,T])$. We say that a dynamic trading strategy $\pi$, or equivalently the gains process $X^\pi_\cdot$, is \emph{$\Pc$-admissible}, if there exist $a \ge 0$ and  a process $\Gamma$, with $\Gamma_t \ge 0$ $\Pc_t$-\as{} for all $t \ge 0$, such that $\sup_{0 \le t \le T} \Gamma_t$ is $\Q$-integrable for any $\Q \in \Qc(\Pc)$, and  $D_t X_t^\pi \ge -a(1+\Gamma_t)$ $\Pc$-\as{} holds for all $t$.
\end{definition}
By abuse of notation, we will often write $\int_0^t \pi_s \, \di(D_sS_s) = D_t X_t^\pi$, despite the fact that $\pi$ may not be explicitly defined as a pathwise object.
In most examples, process $\Gamma$ will be taken without further comment as $\Gamma_t = 1 + D_t |S_t|$, and then Doob's inequality will give the required integrability provided 
$$\Eq{((1+D_T S_T) \log (1+D_T|S_T|))_+} < \infty,\;\;\;\text{ for all}\;\;\; \Q \in \Qc(\Pc).$$
Note that we need a constraint on dynamic trading strategies to rule out possible doubling strategies. It is natural to impose conditions which are pathwise and not probabilistic since we are potentially considering multiple pricing measures. Relevant results which show that the pathwise interpretation is sufficient can be found in \cite{acciaio_trajectorial_2013}. The condition above will naturally be satisfied if (for example) our market includes derivatives whose payoffs have growth rate which is larger than $((1+D_T S_T) \log (1+D_T|S_T|))_+$. We can then deduce from the relevant pathwise condition that the trading portfolio is uniformly integrable. If we have stronger integrability (for example we know $\Eq{S_T^p} < \infty$ for some $p>1$), then we can weaken the pathwise constraint by increasing $\Gamma$ appropriately.

We make the following definition:
\begin{definition} \label{def:dynam-trad-strat}
  We say that $\Vc$ is a set of \emph{admissible dynamic trading strategies} if, for each $\pi \in \Vc$, the gains process $X^\pi$ is:
  \begin{enumerate}
  \item $\F$-adapted,
  \item $\Pc$-admissible, and
  \item $D_t X_t^\pi$ is a $\Q$-local martingale for every $\Q \in \Qc(\Pc)$.
  \end{enumerate}
\end{definition}

We further assume that the set $\Vc$ is closed under addition and non-negative scalar multiplication, that is, if $\pi, \psi \in \Vc$ and $\lambda, \mu \ge 0$, then $\lambda \pi+\mu \psi \in \Vc$, where $X^{\lambda\pi+\mu\psi}:= \lambda X^\pi + \mu X^\psi$.

\begin{remark}\label{Rem:Est2.6}
  Note that in the case where the asset price process $S$ is non-negative, the class of simple trading strategies, ie $\pi_t = \sum_{i = 1}^n \, \indic{t \in (\tau_{i-1},\tau_i]} a_i$, where $a_i$ are bounded and  $\Fc_{\tau_i}$-measurable, and $(\tau_i)_{i=1}^n$ is a sequence of increasing, $\F$-predictable stopping times in $[0,T]$, belongs to the set of admissible strategies. Similarly, the \emph{buy-and-hold} strategy is always allowed in a set of admissible dynamic trading strategies. 
\end{remark}

We can adopt results from Dolinsky-Soner, e.g.~\cite{Dolinsky:2014aa} to show that the class of progressively measurable trading strategies includes  
finite variation strategies $\pi$, defining the stochastic integral pathwise via integration by parts: 
$$
\int_0^t \pi_s \, \di(D_sS_s) :=\pi_t D_tS_t-\pi_0 S_0-\int_0^tD_sS_s d\pi_s.
$$
This definition is consistent with the stochastic integration for simple trading strategies.

\section{Trading with constraints on Intrinsic Wealth}

\subsection{General setup and preliminary results} \label{sec:setup}

We begin by making the following observation about trading under intrinsic value constraints. We fix a class $\Pc$ of possible probability measures, and a set $\Vc$ of admissible dynamic trading strategies. We first suppose that there is an agent who acts to maximise utility from wealth, and the utility function $U$ satisfies the Inada conditions ($U:[0,\infty) \to \R\cup \{-\infty\}$ is concave increasing and $U'(0) = \infty, U'(\infty) = 0$). We suppose that the agent will maximise worst case utility, in the presence of a derivative $C_T$ which has been purchased for price $c_0$, so the problem is to find:
\begin{equation}\label{eq:MainProb2}
  \sup_{ \pi\in\Vc} \inf_{ \P \in \Pc} \Edot{\P}{U\left(D_T^{-1}(w_0 - c_0) + C_T + X_T^\pi\right)}, 
\end{equation}
where $X_t^\pi$ is the gains from dynamic trading, subject to the \emph{intrinsic budget constraint}
\begin{equation}\label{eq:Wdefn2}
  W_t^{\pi,C} := D_t^{-1}(w_0-c_0) + \In_t(C_T)  + X_t^{\pi} \ge -D_t^{-1} \alpha, \qquad \Pc-\as{}, \forall\; 0\le t \le T,
\end{equation}
where $w_0, \alpha \ge 0$.

\begin{definition}\label{def:admissset}
  We call a trading strategy $\pi\in \Vc$ a $(w_0, \alpha, c_0,C_T)$-intrinsically admissible dynamic trading strategy if $W_T^{\pi,C}\geq 0$ and \eqref{eq:Wdefn2} holds. Then, we write $\pi \in \Vc(w_0,\alpha,c_0,C_T)$.
\end{definition}

Our first result describes some simple cases where the trader's behavior can be easily described.

\begin{lemma}\label{lem:BasicHedging}
  \begin{enumerate}
  \item\label{item:4} Suppose there exists $\psi \in \Vc$ which $\Pc$-superreplicates $C_T$ for initial value $\kappa$:
    \begin{equation*}
      \kappa + \int_0^T \psi_r \, \di (D_r S_r) \ge D_T C_T, \quad \Pc-\as{},
    \end{equation*}
    and further
    \begin{equation}\label{eq:1}
      c_0 + \int_0^t \psi_r \, \di (D_r S_r) \ge D_t \In_t(C_T), \quad \forall t, \Pc-\as{}
    \end{equation}
    Then if $\kappa < c_0$, the market price of the option, it is never optimal for the trader to purchase the option, that is, for all $\pi \in \Vc(w_0,\alpha,c_0,C_T)$ there exists $\hat{\pi} \in \Vc(w_0,\alpha, 0,0)$ such that $W_T^{\hat{\pi}} \ge W_T^{\pi,C}$.
  \item\label{item:5} Suppose there exists a strategy $\psi \in \Vc$ which $\Pc$-subreplicates $C_T$ for initial value $\kappa$:
    \begin{equation*}
      \kappa + \int_0^T \psi_r \, \di (D_r S_r) \le D_T C_T, \quad \Pc-\as{},
    \end{equation*}
    such that $-\psi\in \Vc$ and  the path constraint:
    \begin{equation*}
      c_0 +  \int_0^t \psi_r \, \di (D_r S_r) \le D_t \In_t(C_T),  \quad \forall t, \Pc-\as{}
    \end{equation*}
    holds.  If $\kappa > c_0$, then the trader can find portfolios with arbitrarily large utility.
  \end{enumerate}
\end{lemma}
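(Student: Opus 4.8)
The plan is to prove both parts constructively: in each case I would write down an explicit strategy built from $\pi$ and $\psi$, and then verify \eqref{eq:Wdefn2} by a direct pathwise computation, using the positive homogeneity of $\In_\cdot$ and the fact (recorded after Definition~\ref{def:dynam-trad-strat}) that $\Vc$ is closed under addition and non-negative scalar multiplication. Because every strategy I construct is an explicit combination of $\pi$ and $\psi$, no delicate measurability or admissibility argument beyond these closure properties should be required, and every inequality is $\Pc$-\as{}, so the conclusions hold uniformly over the model class.

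For part~(\ref{item:4}), given any $\pi \in \Vc(w_0,\alpha,c_0,C_T)$ I would take $\hat\pi := \pi + \psi \in \Vc$. Rearranging \eqref{eq:1} gives $X_t^\psi \ge \In_t(C_T) - D_t^{-1}c_0$ for $t<T$, while at $t=T$ the superreplication bound together with $\In_T(C_T)=C_T$ and $\kappa < c_0$ gives the same inequality; hence
\[
  W_t^{\hat\pi} = D_t^{-1}w_0 + X_t^\pi + X_t^\psi \ge D_t^{-1}(w_0-c_0) + \In_t(C_T) + X_t^\pi = W_t^{\pi,C}
\]
for every $t\in[0,T]$, $\Pc$-\as{}. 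This single pathwise domination does all the work: it transfers the constraint $W_t^{\hat\pi}\ge -D_t^{-1}\alpha$ from $\pi$, yields $W_T^{\hat\pi}\ge W_T^{\pi,C}\ge 0$, and so places $\hat\pi$ in $\Vc(w_0,\alpha,0,0)$ while dominating the terminal wealth of $\pi$ --- exactly the assertion. Here \eqref{eq:1} is what controls the intermediate path constraint, whereas the superreplication hypothesis (with $\kappa<c_0$) is what secures the terminal comparison.

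For part~(\ref{item:5}) the idea is to manufacture a scalable arbitrage. For $\lambda \ge 0$ I would buy $\lambda$ units of the option (payoff $\lambda C_T$, cost $\lambda c_0$, intrinsic value $\lambda\In_t(C_T)$ by positive homogeneity) and trade with $-\lambda\psi$, which lies in $\Vc$ since $-\psi \in \Vc$ and $\Vc$ is closed under non-negative scalar multiplication. The path constraint for $\psi$ rearranges to $-\lambda X_t^\psi \ge \lambda\big(D_t^{-1}c_0 - \In_t(C_T)\big)$, so the intrinsic-value terms cancel and one finds $W_t^{-\lambda\psi,\lambda C} \ge D_t^{-1}w_0 \ge -D_t^{-1}\alpha$ for all $t$, while the subreplication bound gives the terminal estimate $W_T^{-\lambda\psi,\lambda C} \ge D_T^{-1}\big(w_0 + \lambda(\kappa-c_0)\big)$. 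Since $\kappa > c_0$, this deterministic lower bound tends to $+\infty$ as $\lambda\to\infty$; as $U$ is increasing, the expected utility under every $\P\in\Pc$ is then at least $U\big(D_T^{-1}(w_0+\lambda(\kappa-c_0))\big)$, which increases to $U(\infty) = \sup_x U(x)$.

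The rearrangements of \eqref{eq:1} and of the sub-/super-replication inequalities are routine. The one point I would treat carefully, and which I see as the main obstacle, is the precise meaning of ``arbitrarily large utility'': the construction produces only a deterministic terminal wealth diverging to $+\infty$, so the honest conclusion is that the value of the constrained problem equals the trivial upper bound $\sup_x U(x)$ (literally $+\infty$ when $U$ is unbounded above, and otherwise approached in the limit). I would also confirm that scaling the option position to $\lambda C_T$ is admissible within the portfolio class underlying \eqref{eq:MainProb2}, since it is exactly the homogeneity $\In_t(\lambda C_T)=\lambda\In_t(C_T)$ that legitimises the scaling of the arbitrage.
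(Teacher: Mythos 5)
Your proof is correct and takes essentially the same route as the paper's: in part~(\ref{item:4}) you use the combined strategy $\pi+\psi$, with \eqref{eq:1} controlling the intermediate constraint and the superreplication bound plus $\kappa<c_0$ giving the terminal comparison, and in part~(\ref{item:5}) you build the same scalable position of $\lambda$ options hedged by $-\lambda\psi$ whose intrinsic wealth never drops below $D_t^{-1}w_0$. Your packaging of part~(\ref{item:4}) as a single pathwise domination $W_t^{\hat\pi}\ge W_t^{\pi,C}$ valid for all $t$ is a mild streamlining of the paper's two separate checks, and your closing remark on the precise meaning of ``arbitrarily large utility'' is a fair clarification of what the paper leaves implicit.
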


\begin{proof}
  \begin{enumerate}
  \item We compare the strategy which purchases the option for price $C_0$, and follows the $(w_0, \alpha, c_0,C_T)$-admissible dynamic trading strategy $\pi \in \Vc(w_0,\alpha,c_0,C_T)$, with the strategy which simply follows the dynamic trading strategy $\pi + \psi$. Since $\pi$ and $\psi$ are both in $\Vc$, so too is the combined trading strategy. We need to show that $\pi + \psi \in \Vc(w_0,\alpha,0,0)$ and $W_T^{\pi+\psi,0} \ge W_T^{\pi,C}$.

    Now $W_T^{\pi,C} = D_T^{-1}(w_0-c_0) + C_T + X_T^{\pi} \ge 0$ since $\pi \in \Vc(w_0,\alpha,c_0,C_T)$, and $W_T^{\pi+\psi,0} = D_T^{-1} w_0 + X_T^{\pi+\psi} = D_T^{-1} w_0 + X_T^{\pi} + X_T^{\psi} \ge D_T^{-1} (w_0- \kappa) + C_T + X_T^{\pi}  > W_T^{\pi,C}$. It remains to show that the strategy $\pi + \psi$ satisfies \eqref{eq:Wdefn2}. Since $\pi \in \Vc(w_0,\alpha,c_0,C_T)$, we know that
    \begin{align*}
      -D_t^{-1} \alpha
      & \le D_t^{-1}(w_0-c_0) + \In_t(C_T)  + X_t^{\pi} \\
      & \le  D_t^{-1}(w_0-c_0) + \left(X_t^\psi + D_t^{-1} c_0\right)  + X_t^{\pi}\\
       & = D_t^{-1}w_0 + X_t^{\pi+\psi} 
    \end{align*}
    for all $t, \Pc-\as{}$, using \eqref{eq:1} in the second line.

  \item Let $\lambda >0$ and suppose the trader buys $\lambda$ units of the derivative for price $c_0$, and takes short position $-\lambda \psi_t$. Then the traders terminal wealth is:
    \begin{equation*}
      W_T^{-\lambda \psi, \lambda C} = D_T^{-1} w_0 + \lambda\left( C_T - X_T^\psi - D_T^{-1} \kappa\right) + \lambda (\kappa - c_0).
    \end{equation*}
    Since first term in brackets is non-negative, and the second term is strictly positive, the trader's utility can be made arbitrarily large as $\lambda \to \infty$.

    On the other hand, at any time $t$, we have:
    \begin{equation*}
      D_t^{-1}w_0 + \lambda \In_t(C_T) - \lambda X_t^{\psi} - \lambda D_t^{-1} c_0 \ge D_t^{-1}w_0\ge -D_t^{-1} \alpha,
    \end{equation*}
    and so the strategy satisfies the trading constraint.
  \end{enumerate}
\end{proof}

\begin{remark}
  \begin{enumerate}
  \item[(a)] Note that Lemma~\ref{lem:BasicHedging} can also be applied to shorting an option, by replacing $C$ by $-C$, and $c_0$ by $-c_0$.
  \item[(b)] In (\ref{item:4}), the pathwise condition is fairly weak: we know $\kappa + D_T X_T^\psi \ge D_T C_T$, and hence for any $\Q \in \Qc(\Pc)$, using Definitions~\ref{def:dynam-trad-strat} and \ref{def:intrinsic}, we expect that
    \begin{align*}
      \kappa + D_t X_t^{\psi} \ge \kappa + \Eq{D_T X_T^{\psi}|\Fc_t} \ge  \Eq{D_T C_T|\Fc_t} \ge \In_t(C_T) .
    \end{align*}
    This is almost sufficient to deduce the pathwise constraint, however there is no guarantee in our setup that this may hold $\Pc$-almost everywhere for a given $t$; there may exist sets of paths which are $\Pc$-possible, but do not appear under any $\Q \in \Qc(\Pc)$ for superhedging problems. See \eg{} \cite{aksamit_robust_2020}  for a discussion of this phenomena.

    On the other hand, the corresponding condition in (\ref{item:5}) is much stronger, in particular taking $t=0$ it already implies that the market price of the option is below its intrinsic value. This reflects the much stronger conclusion possible in (\ref{item:5}).
    
  \end{enumerate}
      
\end{remark}


\subsection{European Options in the Black-Scholes-Merton model}

For motivation, we start by considering the case where the agent believes the underlying model is the Black-Scholes-Merton model. In this case, the set $\Pc$ is a singleton, and moreover, the market is complete, so a desired (non-negative, integrable) terminal wealth $X_T$ can be realised through an admissible dynamic trading strategy, with portfolio process $\pi \in \Vc$ such that $X_t^\pi = D_t^{-1} \Eq{D_T X_T|\Fc_t}$, where $\Q$ is the usual (unique) risk-neutral measure. We will assume that the intrinsic price value is given by $\In = \In^\circ$ as described in Example~\ref{ex:Intrinsic}.

We suppose that our agent holds a European option with terminal value $h(S_T)$, for some measurable function $h$ such that $h(x) \le a(1+x)$ for some $a>0$, and we are interested in the admissibility of this simple trading strategy for different values of $\alpha$. We can write $\BS(h,t,T,S_t)$ for the price at time $t$ of an option with payoff $h$, time-to-maturity $T$ and current asset price $S_t$. If the trader purchases the option and trades dynamically to hedge the risk completely, it follows that $X_t^\pi = D_t^{-1}\BS(h,0,T,S_0) - \BS(h,t,T,S_t)$. If the trader follows the strategy of investing in the portfolio $h$ at price $c_0$ and hedging using the strategy $\pi$, then the intrinsic portfolio value at time $t$ is given by
\begin{align*}
  W_t^{\pi,h} & := D_t^{-1} (w_0-c_0) +  \In_t(h(S_T))  + X_t^{\pi} \\
              & \;= D_t^{-1} (w_0-c_0)  + \frac{D_T}{D_t} h^*\left(S_t \frac{D_t}{D_T}\right) + D_t^{-1}\BS(h,0,T,S_0) - \BS(h,t,T,S_t).
\end{align*}

Consider initially the case where $h(x) = (x-K)_+$, for $K \ge 0$, write $\BS(h,\cdot,\cdot,\cdot) = \BS^C(K,\cdot,\cdot,\cdot)$ and denote by $c_0(K)$ the time-0 market price of the option, which (since we are taking a long position) we expect (but do not need) to be lower than the fair price of the derivative, \ie{} $c_0(K) < \BS^C(K,0,T,S_0)$. Write $\Delta C(K) := \BS^C(K,0,T,S_0) - c_0(K)$, the difference between the fair price and the market price, then our admissibility criteria (\ref{eq:Wdefn2}) for this strategy becomes:
\begin{align*}
   D_t^{-1}\left(w_0 + \Delta C(K)\right)  + \left(S_t-K \frac{D_T}{D_t}\right)_+ - \BS^C(K,t,T,S_t) & \ge -\frac{\alpha}{D_t}\\
   \iff  \quad  \left(w_0 + \Delta C(K) \right) + \left(S_tD_t-K D_T\right)_+ - D_t\BS^C(K,t,T,S_t) & \ge -\alpha.
\end{align*}
Now, noting that $t \mapsto D_t \BS^C(K,t,T,sD_t^{-1})$ is decreasing by Jensen's inequality and the convexity of $s \mapsto \BS^C(K,t,T,s)$, we have:
\begin{align*}
  \inf_{\substack{t \in [0,T]\\ s \in [0,\infty)}}
  \Big\{(sD_t- KD_T&)_+  - D_t\BS^C(K,t,T,s)\Big\} \\
  & = \inf_{\substack{t \in [0,T]\\ s \in [0,\infty)}}
  \left\{(s-KD_T)_+ - D_t\BS^C(K,t,T,sD_t^{-1})\right\} \\
  & = \inf_{ s \in [0,\infty)}
  \left\{(s-KD_T)_+ - \BS^C(K,0,T,s)\right\} \\
& = -\BS^C(K,0,T,KD_T).
\end{align*}

We can summarise in the following:
\begin{proposition}
  In the Black-Scholes-Merton problem with intrinsic price given by $\In^\circ$, for $\pi$ the usual delta-hedging of a long position in the European call option with strike $K$, then $\pi \in \Vc(w_0, \alpha, c_0(K), (S_T-K)_+)$ if
  \begin{equation} \label{eq:9}
    w_0 + \Delta C(K)  \ge \BS^C(K,0,T,KD_T) -\alpha,
  \end{equation}
  where $\Delta C(K) := \BS^C(K,0,T,S_0) - c_0(K)$.
  Similarly $$-\pi \in \Vc(w_0, \alpha, -c_0(K), -(S_T-K)_+)$$ if
  \begin{equation*}
    w_0 + \alpha - \Delta C(K) \ge K D_T.
  \end{equation*}
\end{proposition}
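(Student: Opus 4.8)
The plan is to verify, for each of the two strategies, the three requirements of Definition~\ref{def:admissset}: membership in $\Vc$, the terminal bound $W_T\ge0$, and the intrinsic budget constraint \eqref{eq:Wdefn2}. Membership is quick and I would dispatch it first: in the complete Black--Scholes--Merton market both $\pi$ and $-\pi$ have discounted gains $D_tX_t^\pi=\BS^C(K,0,T,S_0)-D_t\BS^C(K,t,T,S_t)$ (and its negative) that are genuine $\Q$-martingales, and since $0\le\BS^C(K,t,T,S_t)\le S_t$ each is bounded below in the sense of $\Pc$-admissibility with $\Gamma_t=1+D_t|S_t|$; thus conditions \emph{(i)}--\emph{(iii)} of Definition~\ref{def:dynam-trad-strat} hold and $\pi,-\pi\in\Vc$.

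For the long position the budget constraint is precisely the inequality displayed just before the Proposition. Using $h^*=h$ for the call, so $\Inc_t((S_T-K)_+)=D_t^{-1}(S_tD_t-KD_T)_+$, and the identity $w_0-c_0(K)+\BS^C(K,0,T,S_0)=w_0+\Delta C(K)$, I would rewrite \eqref{eq:Wdefn2} (after multiplying by $D_t$) as
$$
  w_0+\Delta C(K)+\bigl(S_tD_t-KD_T\bigr)_+-D_t\BS^C(K,t,T,S_t)\ge-\alpha
$$
for all $t$, $\Pc$-\as{} The bracketed expression has already been minimised in the text: its infimum over $(t,s)\in[0,T]\times[0,\infty)$ equals $-\BS^C(K,0,T,KD_T)$. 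Since every realised pair $(t,S_t)$ lies in this set, the condition $w_0+\Delta C(K)\ge\BS^C(K,0,T,KD_T)-\alpha$, which is \eqref{eq:9}, forces the displayed inequality simultaneously on all paths. Finally $W_T^{\pi,C}=D_T^{-1}(w_0+\Delta C(K))$, so $W_T\ge0$ holds whenever $\Delta C(K)\ge0$, the case of interest, and in any event follows from \eqref{eq:9} as soon as $\alpha\le\BS^C(K,0,T,KD_T)$.

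The short position is structurally identical, but its content lies in the correct intrinsic value of $-(S_T-K)_+$. The point I would stress is that $x\mapsto-(x-K)_+$ is concave, so by Example~\ref{ex:Intrinsic} its greatest convex minorant is $-x$, giving the far more pessimistic $\Inc_t(-(S_T-K)_+)=-S_t$ for $t<T$ rather than $-(S_t-K)_+$. Substituting $c_0\mapsto-c_0(K)$, this intrinsic value and $X^{-\pi}=-X^\pi$ into \eqref{eq:Wdefn2} and multiplying by $D_t$, the constraint becomes
$$
  w_0+\alpha-\Delta C(K)\ge D_tS_t-D_t\BS^C(K,t,T,S_t)
$$
for all $t$, $\Pc$-\as{} It then remains to bound the right-hand side, and I would do this by put--call parity: with $\BS^P$ the corresponding put price, $S_t-\BS^C(K,t,T,S_t)=KD_T/D_t-\BS^P(K,t,T,S_t)$, so $D_t\bigl(S_t-\BS^C(K,t,T,S_t)\bigr)=KD_T-D_t\BS^P(K,t,T,S_t)\le KD_T$, with equality approached as the spot grows and the put value vanishes. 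Hence the supremum of the right-hand side over $[0,T]\times[0,\infty)$ is $KD_T$, the constraint reduces to $w_0+\alpha-\Delta C(K)\ge KD_T$, and the terminal check $W_T^{-\pi,-C}=D_T^{-1}(w_0-\Delta C(K))\ge0$ follows provided $\alpha\le KD_T$.

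The step I expect to be the main obstacle is the short-position supremum; everything else is either immediate (sufficiency needs only that realised $(t,S_t)$ lie in the closed rectangle, so no support argument is required for the stated ``if'') or already done in the text (the long-position infimum). The one conceptual trap to avoid is the asymmetry between the two cases: it is produced entirely by the convex-minorant definition of $\Inc$, which collapses the short call's intrinsic value to $-S_t$, and this is exactly why the short-side threshold is the clean quantity $KD_T$ while the long-side threshold is the option price $\BS^C(K,0,T,KD_T)$.
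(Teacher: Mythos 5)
Your proof is correct and follows essentially the same route as the paper: the long-position constraint is reduced to the infimum $-\BS^C(K,0,T,KD_T)$ already computed in the text, and the short-position constraint to the bound $D_t\bigl(S_t-\BS^C(K,t,T,S_t)\bigr)\le KD_T$, which the paper obtains by the same inner minimisation over $s$ (your put--call parity step just makes that computation explicit). Your extra checks that $\pi,-\pi\in\Vc$ and that $W_T\ge 0$ (noting the latter needs $\Delta C(K)\ge 0$ or a mild restriction on $\alpha$) are details the paper leaves implicit.
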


\begin{proof}
  The first part of the result was shown in the discussion preceding the proposition. To prove the second part of the claim, we sell the call and hedge dynamically. Then we have $h(x) = -(x-K)_+$, $h^*(x) = -x$, and so $\In_t(-(S_T-K)_+) = -S_t$. Then
  \begin{align*}
    W_t^{-\pi,h} & := D_t^{-1} (w_0+c_0(K)) + \In_t(-(S_T-K)_+) + X_t^{-\pi} \\
                & \; = -S_t + \BS^C(K,t,T,S_t) + D_t^{-1} (w_0 + c_0(K) - \BS^C(K,0,T,S_0)).
  \end{align*}
  As above, we need to compute:
  \begin{align*}
    \inf_{\substack{t \in [0,T]\\ s \in [0,\infty)}} \Big\{-sD_t +  D_t\BS^C(K,t,T,s) \Big\}
    &  = \inf_{t \in [0,T]} \left\{D_t\inf_{s \in [0,\infty)}\left[ -s + \BS^C(K,t,T,s)\right] \right\}\\
    & = \inf_{t \in [0,T]} \left\{ - D_t \left(K D_T D_t^{-1}\right)\right\}\\
    & = - K D_T.
  \end{align*}
  The conclusion follows.
\end{proof}


The term $\BS^C(K,0,T,KD_T)$ on the right-hand side of \eqref{eq:9} is the price of an at-the-money call option with strike $K$, and by the Black-Scholes formula, we can rewrite this as $K D_T \left( 1 - 2 \Phi(\sigma \sqrt{T}/2)\right)$. Suppose that  $w_0 + \alpha>0$ and the call options are underpriced. Then we can always find some small $K$, $K< K_+(w_0 + \alpha) := (w_0 + \alpha) D_T^{-1} \left( 1 - 2 \Phi(\sigma \sqrt{T}/2)\right)^{-1}$ such that we can buy the call, and hedge dynamically to guarantee a profit. For larger strikes it will not be possible to follow this strategy unless the mis-pricing is sufficiently large. A similar behaviour is observed when the prices are too large, but now the asymmetry in the intrinsic value of the call options makes the critical strike $K< K_-(w_0 + \alpha) := (w_0 + \alpha) D_T^{-1}$.

\subsection{Consistency of market prices under constrained trading}

Above, we considered only the case where single call options were traded. In reality call options at a range of strikes and maturities are available for trading, and one natural question is whether the prices are consistent. Simple model-free conditions for the absence of arbitrage are well understood, based on simple model-independent arbitrage strategies which can enforce such an arbitrage. In this section, we analyse whether these strategies are available to a trader whose strategies are subject to the admissibility criteria proposed above.

A common setup is to consider the case where call options with all strikes at a given maturity are traded. Then the prices are free of (model-independent) arbitrage only if the market prices for call options, $C(K)$, satisfies the conditions: \emph{(i)} $C$ is convex; \emph{(ii)} $C$ is decreasing; \emph{(iii)} $C(0) = S_0$; \emph{(iv)} $C'_+(0) \ge -D_T$; moreover, it is commonly assumed that also \emph{(v)} $C(K) \to 0$ as $K \to \infty$. The first two conditions can classically be enforced by simple arbitrage. In this section, we show that there exist trading strategies in $\Vc$, which satistfy \eqref{eq:Wdefn2} for $\alpha = 0$, and generates a strictly positive wealth if any of \emph{(i)}--\emph{(iv)} fail. We note that \emph{(v)} is generally more subtle; see \eg{} \cite{Cox:2016aa}, but under this further assumption (\eg{} \cite{Cox:2011aa}), it follows that there exists a probability measure $\mu$ on $\R_+$ such that $C(K) = \int (x-K)_+ \, \mu(\dx)$. We consider a weaker version, which is simply to enforce positivity, which normally follows from the limiting behaviour and the decreasing property.

\begin{lemma}
  Suppose that $\Pc$ is given by $\Pc^\circ$, $\In$ is given by $\In^\circ$, and European call options with strike $K$ and maturity $T$ can be traded at price $C(K)$ at time $0$. Suppose that any of \emph{(i)} $C$ is convex; \emph{(ii)} $C$ is decreasing; \emph{(iii)} $C(0) = S_0$; \emph{(iv)} $C'_+(0) \ge -D_T$; \emph{(v)} $C$ is non-negative; fail.  Then there exists a portfolio of call options with payoff $g(S_T) = \sum_{i=1}^k a_i (S_T-K_i)_+$ and price $g_0 = \sum_{i=1}^k a_i C(K_i)$ with $k \in \Nb$, $a_i \in \R$, and $\eps>0$ such that $\Vc(-\eps,0,g_0,g)$ is non-empty.
\end{lemma}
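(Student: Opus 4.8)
The plan is to argue by cases according to which of \emph{(i)}--\emph{(v)} fails, and in each case to exhibit an explicit static portfolio of calls $g(S_T)=\sum_{i=1}^k a_i(S_T-K_i)_+$ together with a trivial dynamic strategy---either $\pi\equiv 0$ or a buy-and-hold position $\pi\equiv\pm1$ in the underlying, both of which lie in $\Vc$ by Remark~\ref{Rem:Est2.6}---and then to choose $\eps>0$ small enough that $\pi\in\Vc(-\eps,0,g_0,g)$. By Definition~\ref{def:admissset} this amounts to two checks: that the terminal wealth satisfies $W_T^{\pi,g}\ge 0$, and that the pathwise intrinsic budget constraint \eqref{eq:Wdefn2} holds, namely $W_t^{\pi,g}\ge 0$ for all $t\in[0,T]$ (recall $\alpha=0$). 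The terminal inequality is a routine model-free super-replication check; the pathwise inequality is the real content, and I will handle it throughout using the explicit form of $\Inc$ from Example~\ref{ex:Intrinsic}, case \emph{(i)}, in particular that $D_t\Inc_t(h(S_T))=D_Th^*(D_T^{-1}D_tS_t)$ for $t<T$, together with elementary bounds on the greatest convex minorant $h^*$.

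For the portfolios: if \emph{(i)} fails there are strikes $K_1<K_2<K_3$ with $C(K_2)$ strictly above the chord, and I take the corresponding butterfly, which has payoff $g\ge 0$ and cost $g_0<0$; since $g\ge 0$ vanishes outside $[K_1,K_3]$ its convex minorant collapses to $g^*\equiv 0$. If \emph{(ii)} fails I take the call spread $g(x)=(x-K_1)_+-(x-K_2)_+$ with $K_1<K_2$ and $C(K_1)<C(K_2)$, again with $g\ge0$, $g^*\equiv 0$ and $g_0<0$. If \emph{(v)} fails I take a single call $g(x)=(x-K)_+$ at a strike with $C(K)<0$, where $g^*=g\ge 0$. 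For \emph{(iii)}, $g$ is $\pm$ the strike-zero call, i.e.\ $g(x)=\pm x$ (using $(S_T)_+=S_T$), which is linear so $g^*=g$, paired with the opposite buy-and-hold position in the asset. For \emph{(iv)}, convexity of $C$ (or directly the definition of the right derivative) yields a small $K>0$ with $C(0)-C(K)>D_TK$, and I take $g(x)=-\min(x,K)=-(x)_++(x-K)_+$, which is convex with $g^*=g\ge -K$, together with $\pi\equiv 0$.

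The budget verification, which I expect to be the main obstacle, then reduces in each case to a single scalar inequality. For the cases using $\pi\equiv 0$ (\emph{(i)}, \emph{(ii)}, \emph{(iv)}, \emph{(v)}) one has $W_t^{\pi,g}=D_t^{-1}(-\eps-g_0)+\Inc_t(g(S_T))$, and the lower bound $\Inc_t(g(S_T))\ge D_t^{-1}D_T\inf_x g^*(x)$ gives $W_t^{\pi,g}\ge D_t^{-1}\big(-\eps-g_0+D_T\inf_x g^*\big)$; here $\inf_x g^*$ is $0$ in \emph{(i)}, \emph{(ii)}, \emph{(v)} and $-K$ in \emph{(iv)}, and since $-g_0>0$ in the first three and $-g_0=C(0)-C(K)>D_TK$ in \emph{(iv)}, the bracket is strictly positive for all sufficiently small $\eps>0$. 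For \emph{(iii)}, the buy-and-hold gains process $X_t^\pi=\pm(S_t-D_t^{-1}S_0)$ exactly cancels the linear intrinsic value $\Inc_t(g(S_T))=\pm S_t$, leaving $W_t^{\pi,g}=D_t^{-1}(-\eps+|C(0)-S_0|)$, positive for $\eps<|C(0)-S_0|$. The same bracket controls the terminal wealth, so $W_T^{\pi,g}>0$ in every case. The only genuinely delicate points are the identification of $g^*$ (hence of $\Inc$) for the non-convex payoffs in \emph{(i)} and \emph{(ii)}, where one must check the convex minorant really collapses to zero, and the $\Pc$-admissibility of the short buy-and-hold strategy in \emph{(iii)}, which follows by taking $\Gamma_t=1+D_t|S_t|$ in the definition of $\Pc$-admissibility.
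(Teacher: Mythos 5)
Your proposal is correct and follows essentially the same route as the paper: the identical case-by-case construction (butterfly for \emph{(i)}, call spread for \emph{(ii)}, asset-versus-zero-strike-call for \emph{(iii)}, $-\min(\cdot,K)$ for \emph{(iv)}, a negatively priced call for \emph{(v)}), with the same identification of the convex minorants and the same verification of the intrinsic budget constraint. The only (harmless) variations are that you spell out the pathwise bound via $\inf g^*$ more explicitly and, in case \emph{(iv)}, work with $C(0)$ directly rather than invoking \emph{(iii)} to replace it by $S_0$.
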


Note that the conclusion of the lemma, $\Vc(-\eps,0,g_0,g)$ is non-empty, is equivalently a formulation of arbitrage in our setting: that is, there exists a portfolio and trading strategy which can be setup with initial capital $-\eps$, and which will never use our `intrinsic capitalisation' capacity, $\alpha$, but will finish with a non-negative wealth. It is easy to check that the strategies we implement are in fact scalable, so that we can in fact find such a strategy for an arbitrary $\eps >0$ by a simple scaling argument.

\begin{proof}
  Suppose that $C(K)$ is not convex, then there exists $K_1 < K_2 < K_3$ such that $C(K_2) > \lambda C(K_1) + (1-\lambda) C(K_3)$, where $\lambda = (K_3-K_2)/(K_3-K_1)$. Then the agent should define the function $g$ such that $a_1 = \lambda$, $a_3 = (1-\lambda)$, and $a_2 = -1$.

  Choose $\eps:= -g_0 = C(K_2) - \lambda C(K_1) - (1-\lambda) C(K_3)$. The agent holds a portfolio of calls with positive payoff
  \begin{equation*}
    g(S_T) := 
    \begin{cases}
      0 & S_T \not\in( K_1,K_3)\\
      \lambda (S_T-K_1) & S_T \in (K_1,K_2]\\
      (1-\lambda)(K_3-S_T) & S_T \in (K_2,K_3)
    \end{cases}
  \end{equation*}
  So we have $g^*(S_t) = \In_t(g(S_T)) \equiv 0$ for $t<T$. Taking the dynamic trading strategy $\pi$ which is identically zero, we see that our portfolio intrinsic value is $W_t^{\pi,g} = g(S_T)\ind_{t=T}$, which is non-negative under the assumption that $C$ is not convex, and hence $\pi = \in \Vc(-\eps,0,g_0,g)$.

  The cases (ii)--(iv) are then essentially identical. For (ii) we suppose there exist $K_1 < K_2$ with $C(K_1) < C(K_2)$, and pursue the strategy of selling the call with strike $K_2$ and buying the call with strike $K_1$. Then we have $g(x) = (x-K_1)_+ - (x-K_2)_+$, and so $g^*(x) = 0$, and the result follows as above. For (iii) consider either the strategy of selling the asset and buying the call with strike $0$, or selling the call with strike 0, and buying the asset. In this case we have $\In_t(S_T) = S_t$ and $\In_t(-S_T) = -S_t$, and $X_t^{\pi} = \pm(S_t-D_t^{-1} S_0)$, and the conclusion follows.

  For (iv), we have by (iii) a $K>0$ such that $C(K) < S_0 -D_TK$. We sell the asset, and buy the call with strike $K$. Then $\In_t((S_T-K)_+-S_T) = (-S_t) \vee (-D_T D_t^{-1} K) \ge -D_t^{-1} D_T K$, and so the intrinsic value of the portfolio at time $t$ is $(-C(K)+S_0) D_t^{-1} + \In_t((S_T-K)_+ - S_T) > D_t^{-1} D_T K - D_t^{-1} D_T K = 0$. The case (v) is trivial, we can buy the option for negative price, and hold to maturity.
\end{proof}

\begin{remark}
 \begin{enumerate}
 \item  Note that some of the properties of the $\In$ operator use the non-negativity of the prices process. Of course, if the asset price can go negative, then some of the conditions given can fail, e.g. in the Bachelier model.
 \item 
  Consider the case where \emph{(v)} fails. Then the `usual' arbitrage strategy would be to sell a call with a large strike, which should be worth very little, for approximately $\lim_{K \to \infty} C(K)$, and hedge in some way (or just not bother, the model-implied loss will happen with arbitrarily small probability). In our current setup, this will use up some of our lower constraint, since the intrinsic value of this strategy will remain as $-S_t$, no matter how large $K$ is. In this way, we can `use' spare $\alpha$ to generate gains, but the cost may be higher than the value of using this capacity elsewhere, depending on other elements. 
  \end{enumerate}
 \end{remark} 
  


\section{Utility Maximisation in Complete Markets}

In this section we consider the problem as setup in Section~\ref{sec:setup} under the additional assumption that the trader believes in a complete market. In this case, the trader can hedge their risk, subject to the condition that the intrinsic value of their portfolio satisfies the wealth constraint, and we try to understand the impact of this on their behavior.


\subsection{Complete Market Assumption}

To make significant progress on this problem, we make the assumption that $\Pc$ is a singleton, and moreover, $\Pc = \{ \P\}$, where $\P$ is a complete market. In particular, we suppose that there exists a uniformly integrable state-price density process $H_t$, with $H_t > 0$ \as{}, $H_0 = 1$, such that whenever $Y$ is an $\Fc_T$-measurable random variable with $\Ep{D_T^{-1}H_T (1+|Y|) \log \left(1+|Y|\right)_+}<\infty$, there exists a $\mathcal{P}$-admissible portfolio $\pi$ with $\Ep{H_T Y} + X_T^\pi = Y$. In such a market, we can define as usual a risk-neutral measure $\Q$, by $\left.\frac{\di \Q}{\di \P}\right|_{\Fc_t} = H_t D_t^{-1}$. Moreover it follows from our assumptions above that $ H_t D_t^{-1}$ is a $\P$-martingale.

Additionally, we suppose that  a (set of) traded derivatives is available. To each derivative or portfolio of derivatives, we associate a fair intrinsic price process, $\In_t(C_T)$. Note that by Definition~\ref{def:intrinsic} we have $\In_t(C_T) \le C_t := H_t^{-1} \Ep{H_T C_T | \Fc_t}$, where $C_t$ is the arbitrage-free price of the derivative.

If the investor decides to take a long position in the option, her optimisation problem is:
\begin{equation}\label{eq:2}
  \text{maximise } \Ep{u(W_T^{\pi,C})}, \; \text{ subject to } \pi \in \Vc(w_0, \alpha, c_0, C_T),
\end{equation}
where $W_t^{\pi,C}$ is given by \eqref{eq:Wdefn2}; recall Definition \ref{def:admissset}.

In this section, we solve this problem for specific choices of the derivative $C$, and under a range of assumptions on the market measure $\P$. Our approach to this problem is based on results of \cite{El-Karoui:2006aa}. In that paper the authors characterise the martingale $M_t$ which maximises $\Ep{u(M_T)}$ for a concave function $u$ subject to the constraint that $M_0 = 0$ and $M_t \ge J_t$ for some supermartingale $J$. (More generally, if $J$ is not a supermartingale, it is trivial that $J$ can be replaced by its Snell envelope, \ie{} the smallest supermartingale dominating $J$). The main result of \cite{El-Karoui:2006aa} says that, if the supermartingale $J_t$ can be written in the form
\begin{equation}\label{Representation}
  J_t = \Ep{\sup_{t \le u \le T} J^*_u | \Fc_t}
\end{equation}
for some adapted process $J^*$, then the martingale $M_t := \Ep{m \vee \sup_{0 \le u \le T} J^*_u | \Fc_t}$, where $m$ is chosen so that $M_0 = 0$, maximises $\Ep{u(M_T)}$ for any concave function $u$, over the class of all martingales starting at $0$ which dominate $J$. The proof of this representation theorem can be found in \cite{El-Karoui:2008aa}.

Further, in~\cite{El-Karoui:2006aa}, the authors are able to extend their results to cover the case of utility maximisation problems under certain assumptions on the form of the function $u$. We adapt their arguments to apply to our setup. The main complication in the utility maximisation framework is that the expectation and the martingale properties of the wealth are taken under different probability measures ($\P$ and $\Q$ respectively). The aim is to put these under the same measure through an appropriate change of measure, and we require the following condition: There exists $\delta>1$ such that $\Ep{H^{-\delta}_T} <\infty$.

Following~\cite{El-Karoui:2006aa}, we suppose that our utility function is of the form $u(x) = u_p(x) = \frac{x^{1-p}}{1-p}$, where $\frac{1}{1+\delta} < p < 1$. 

Reformulating \eqref{eq:2} using the definition of $W_t^{\pi,C}$ and the complete market characterisation of admissible dynamic trading strategies, our problem is to choose the $\Fc_T$-measurable random variable $X_T^\pi$ to maximise 
$$\Ep{u_p(D_T^{-1}(w_0 - c_0) + C_T + X_T^\pi)}$$
subject to $\Ep{H_T X_T^\pi} = 0$, $D_T^{-1}(w_0 - c_0) + C_T + X_T^\pi \ge 0$ and
\begin{equation}\label{eq:3}
   D_t^{-1}(w_0 - c_0)  + \In_t(C_T) + H_t^{-1}\Ep{H_T X_T^\pi|\Fc_t} \ge -D_t^{-1}\alpha, \quad 0 \le t < T.
\end{equation}
To write our problem in a form where we can apply the results of \cite{El-Karoui:2006aa}, we introduce
\begin{equation}\label{def:Y1}
  Y_T := D_T^{-1}(w_0 - c_0)+C_T+X_T^\pi ,
\end{equation}
so $\Ep{H_T Y_T} = w_0 + (\Ep{H_T C_T} -c_0)$ if and only if $\Ep{H_T X_T^\pi} = 0$; recall that $\Ep{D_T^{-1} H_T} =  1$. Write also $\Delta C := (\Ep{H_T C_T} -c_0) $, the difference between the hedging price (without portfolio constraints) and the market price of the option. Then, in terms of $Y$, \eqref{eq:3} becomes
\begin{equation*}
  \In_t(C_T) + H_t^{-1}\Ep{H_T Y_T|\Fc_t} - H_t^{-1}\Ep{H_T C_T | \Fc_t} \ge -D_t^{-1}\alpha ,
\end{equation*}
or equivalently, writing 
\begin{equation}\label{def:Y2}
Y_t := H_t^{-1} \Ep{H_T Y_T|\Fc_t}, 
\end{equation}
we need:
\begin{equation*}
  D_t Y_t \ge -\alpha + (D_t H_t^{-1}) \Ep{H_T C_T|\Fc_t} - D_t \In_t(C_T), \quad Y_ 0 = w_0 + \Delta C.
\end{equation*}
Note in particular that $(D_tY_t)$ is a $\Q$-martingale.

The issue now is that $DY$ is a martingale under $\Q$, while we maximise under the probability $\P$. To get around this difficulty we introduce a new measure $\bQ$ under which the two conditions can be understood for the same measure. For this purpose, we define $(\xi_{\cdot})$ by
\begin{equation}
  \label{eq:4}
  \xi_T := \frac{H_T^{-\frac{1}{p}}}{\Ep{H_T^{1-\frac{1}{p}}}}, \quad \xi_t := H_t^{-1} \Ep{\xi_T H_T | \Fc_t}.
\end{equation}
Then $\xi_0 = 1$,  $\xi_t H_t$ is a $\P$-martingale, and we can define a change of measure $\frac{\di \bQ}{\di \P} = \xi_T H_T$.

\begin{lemma} \label{lem:Ybar}
  Under $\bQ$, with utility function $u = u_p$, the problem \eqref{eq:2} is equivalent to the problem:
  $$
   \text{maximise } \Ebq{u_p(\ol{Y}_T)},
  $$
  subject to
  \begin{equation}\label{eq:5}
  \begin{cases}
    \ol{Y} &\text{ is a non-negative $\bQ$-martingale}, \\
     \ol{Y}_0 &=  w_0 + \Delta C,\\
     \ol{Y}_t &\ge  -\alpha \xi_t^{-1} D_t^{-1} -\xi_t^{-1} \In_t(C_T)  + \Ebq{C_T \xi_T^{-1}| \Fc_t}.
    \end{cases}
  \end{equation}
Moreover, the optimal terminal wealth $W_T^{\pi,C}$ can be recovered by $W_T^{\pi,C} = \xi_T \ol{Y}_T$, where $\ol{Y}_T$ is the optimiser for the problem above.
\end{lemma}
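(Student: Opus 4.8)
The plan is to change variables via $\ol{Y}_T := \xi_T^{-1} Y_T$ (equivalently $Y_T = \xi_T \ol{Y}_T$), where $Y_T = W_T^{\pi,C}$ is as in \eqref{def:Y1}, and to verify that this sets up a bijection between the feasible set of \eqref{eq:2}, rewritten in terms of $Y$, and the feasible set of \eqref{eq:5}, under which the two objectives differ only by a positive multiplicative constant. First I would record that $\bQ$ is a genuine probability measure equivalent to $\P$: since $\xi_T H_T = H_T^{1-1/p}/\Ep{H_T^{1-1/p}}$ is non-negative with $\P$-expectation $1$, it suffices to check $\Ep{H_T^{1-1/p}} < \infty$. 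Writing $1-1/p = -(1-p)/p$ and using that $p > 1/(1+\delta)$ forces $(1-p)/p < \delta$, this follows by interpolation from the standing assumption $\Ep{H_T^{-\delta}} < \infty$. The associated density process is then $Z_t := \xi_t H_t = \Ep{\xi_T H_T|\Fc_t}$ with $Z_0 = 1$, as in \eqref{eq:4}.

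The objective transforms through the elementary identity $\xi_T^{1-p} = c\,\xi_T H_T$ with $c := \Ep{H_T^{1-1/p}}^{p} > 0$, which is immediate from $\xi_T = c^{-1/p} H_T^{-1/p}$. Combined with the homogeneity $u_p(\xi_T \ol{Y}_T) = \xi_T^{1-p}\,u_p(\ol{Y}_T)$, this gives $\Ep{u_p(Y_T)} = \Ep{\xi_T^{1-p} u_p(\ol{Y}_T)} = c\,\Ep{\xi_T H_T\, u_p(\ol{Y}_T)} = c\,\Ebq{u_p(\ol{Y}_T)}$, so maximising the left-hand side is equivalent to maximising $\Ebq{u_p(\ol{Y}_T)}$. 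Here the choice of $\bQ$ is precisely the one that absorbs the weight $\xi_T^{1-p}$ into the $\bQ$-expectation.

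It remains to match the three constraints, for which the main tool is the abstract Bayes rule under $Z_t = \xi_t H_t$. Non-negativity is immediate since $\xi_T > 0$. For the martingale and initial-value conditions I would compute $\Ebq{\ol{Y}_T|\Fc_t} = Z_t^{-1}\Ep{\xi_T H_T\, \xi_T^{-1} Y_T|\Fc_t} = (\xi_t H_t)^{-1}\Ep{H_T Y_T|\Fc_t} = \xi_t^{-1} Y_t$, using \eqref{def:Y2}; hence $\ol{Y}_t := \xi_t^{-1} Y_t$ is a non-negative $\bQ$-martingale with $\ol{Y}_0 = \xi_0^{-1} Y_0 = w_0 + \Delta C$. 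For the pathwise constraint, dividing the constraint on $D_t Y_t$ preceding the lemma by $D_t \xi_t$ and substituting $Y_t = \xi_t \ol{Y}_t$ yields $\ol{Y}_t \ge -\alpha\xi_t^{-1}D_t^{-1} - \xi_t^{-1}\In_t(C_T) + \xi_t^{-1}H_t^{-1}\Ep{H_T C_T|\Fc_t}$, and a second application of Bayes gives $\xi_t^{-1}H_t^{-1}\Ep{H_T C_T|\Fc_t} = Z_t^{-1}\Ep{\xi_T H_T\,\xi_T^{-1} C_T|\Fc_t} = \Ebq{C_T\xi_T^{-1}|\Fc_t}$, which is exactly the lower bound in \eqref{eq:5}. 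Since $Y_T \mapsto \xi_T^{-1} Y_T$ is a bijection preserving all three constraints and scaling the objective by the positive constant $c$, the problems are equivalent, and the optimiser is recovered via $W_T^{\pi,C} = Y_T = \xi_T \ol{Y}_T$.

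The step I expect to require the most care is the well-definedness and equivalence of $\bQ$ — that is, the integrability of $H_T^{1-1/p}$ — since this is exactly where the standing assumptions $\Ep{H_T^{-\delta}} < \infty$ and $1/(1+\delta) < p < 1$ are genuinely used; once $\bQ$ is in hand, everything else reduces to routine applications of Bayes' rule and the positive homogeneity of $u_p$.
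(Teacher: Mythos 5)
Your proposal is correct and follows essentially the same route as the paper: the change of variables $\ol{Y}_t = \xi_t^{-1} Y_t$, the Bayes-rule verification of the $\bQ$-martingale property, and the identity $\Ep{u_p(Y_T)} = \left(\Ep{H_T^{1-\frac{1}{p}}}\right)^{p}\Ebq{u_p(\ol{Y}_T)}$ all match (your explicit check that $\Ep{H_T^{1-1/p}}<\infty$ is a welcome addition the paper leaves implicit). The only point you gloss over is the reverse direction: passing from a feasible $\ol{Y}$ back to an actual strategy $\pi\in\Vc(w_0,\alpha,c_0,C_T)$ requires market completeness to replicate $D_T\xi_T\ol{Y}_T - D_T C_T - (w_0-c_0)$, which the paper states explicitly and which your appeal to a ``bijection'' silently relies on.
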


\begin{proof}
  Let $X_T^\pi$ be a candidate solution to \eqref{eq:2}, so $\Ep{H_T X_T^\pi} = 0$ and $W_T^{\pi,C} = D_T^{-1}(w_0 - c_0) + C_T + X_T^\pi \ge 0$. Then, from (\ref{def:Y1}, 
  \begin{equation*}
    \Ep{u_p(D_T^{-1}(w_0 - c_0) + C_T + X_T^\pi)} = \Ep{u_p(Y_T)}.
  \end{equation*}
  Define $\ol{Y}_t = Y_t \xi_t^{-1}$; see (\ref{def:Y2}). Then $\Ebq{\ol{Y}_T | \Fc_t} = (\xi_t H_t)^{-1}\Ep{Y_T \xi^{-1}_T H_T \xi_T| \Fc_t} =Y_t  \xi_t^{-1} = \ol{Y}_t$. In particular, $\ol{Y}$ is a $\bQ$-martingale, and $\ol{Y}_T \ge 0$, hence $\ol{Y}$ is a non-negative martingale.
  
  On the other hand, we have 
  \begin{align*}
    \Ep{u_p(Y_T)} 
    & = \Ep{\frac{Y_T^{1-p}}{1-p}}\\
    & = \Ebq{(H_T \xi_T)^{-1} \frac{(\ol{Y}_T\xi_T)^{1-p}}{1-p}}\\
    & = \Ebq{\frac{\xi_T^{-p}}{H_T} u_p(\ol{Y}_T)}\\
    & = \left(\Ep{H_T^{1-\frac{1}{p}}}\right)^{p} \Ebq{u_p(\ol{Y}_T)}.
  \end{align*}
  Moreover, the constraints on the lower bound and the initial value of $\ol{Y}$ follow immediately. As a consequence, any feasible solution $Y$ to the first problem gives rise to a feasible solution $\ol{Y}$ to the second problem, and the corresponding values differ only by a positive constant multiple.  A similar conclusion can be obtained starting with a feasible solution $\ol{Y}$  to  \eqref{eq:5}, and building a candidate solution to   \eqref{eq:2}. Since the market is complete,  the arbitrage free price and the replication strategy of any contingent claim are uniquely determined, and hence there exists an admissible dynamic trading strategy $\pi$  such that  the $\Q$-martingale  $D_t X_t^{\pi}$ replicates the contingent claim $D_T \xi_T \ol{Y}_T-D_T C_T-(w_0-c_0)$.   
\end{proof}

\begin{remark}\label{rem:restprop}
  Note that the intrinsic wealth constraint process
  \begin{equation} \label{eq:ZetaDef}
    \zeta_t := -\alpha D_t^{-1}\xi_t^{-1} - \xi_t^{-1} \In_t(C_T) + \Ebq{ \xi_T^{-1} C_T | \Fc_t}
  \end{equation}
  is a $\bQ$-supermartingale.  To see this, we observe that $\frac{\di \bQ}{\di \Q} = D_T \xi_T$, so $(D_t^{-1}\xi_t^{-1})$ is a $\bQ$-martingale, and $(D_t \In_t(C_T))$ is a $\Q$-submartingale by the assumption that $\In_t(C_T)$ is a fair intrinsic wealth process, so also $(\xi_t^{-1} \In_t(C_T))$ is a $\bQ$-submartingale.
  
\end{remark}

  More generally, we can reformulate the conditions in Lemma~\ref{lem:Ybar} that $(\ol{Y})_{t \le T}$ is a non-negative $\bQ$-martingale such that $\ol{Y}_t \ge \zeta_t$ for $t \le T$ in terms of the process
  \[
    \zeta_t^0 :=
    \begin{cases}
      \zeta_t & t < T\\
      0 & t = T
    \end{cases}.
  \]
  Specifically, since $\zeta_T = -\alpha D_T^{-1}\xi_T^{-1} < 0$, the requirement that $\ol{Y}$ is non-negative and greater than $\zeta_t$ is equivalent to requiring that the process $\ol{Y}$ is greater than $\zeta^0$, and further, equivalently, that it is greater than the Snell envelope of $\zeta^0$, which we denote by $\zeta^*$.


\begin{assumption} \label{ass:sup-decomp}
 The process $\zeta^*$ is a supermartingale of class  $({\mathcal D})$ upper semi-continuous in expectation which admits the decomposition in terms of an optional, upper-right semi-continuous process $J_u^\zeta $, with $J_T^\zeta =0$, as 
 \begin{equation}
    \label{eq:7}
    \zeta_t^* = \Ebq{\sup_{t \le u \le T} J_u^\zeta | \Fc_t}.
  \end{equation}   
 \end{assumption}
  By \cite[Theorem~2.9]{bank_stochastic_2021}, we have that the representation in (\ref{eq:7}) can be obtained for process $\zeta^*$. Observe that, following \cite{El-Karoui:2006aa} and \cite{{El-Karoui:2008aa}}, in order to obtain this representation 
  it is sufficient that the filtration $ \{\Fc_t\}$  is quasi-left-continuous.

\begin{theorem} \label{thm:main_decomp}
  Suppose that $u = u_p$, $\In_t(C_T)$ is a fair intrinsic price process and Assumption~\ref{ass:sup-decomp} holds. Then there exists a feasible solution to \eqref{eq:2} if and only if
  \begin{equation*}
    \Ebq{\sup_{0 \le u \le T} J_u^\zeta} \le w_0 + \Delta C .
  \end{equation*}
  When this condition holds, the optimal terminal wealth $W_T^\pi$ solving \eqref{eq:2} is given by
  \begin{equation*}
    W_T^\pi = \xi_T\left[ \left( \sup_{0 \le u \le T} J_u^\zeta\right) \vee M\right],
  \end{equation*}
  where $M$ is chosen such that $\Ebq{\left( \sup_{0 \le u \le T} J_u^\zeta\right) \vee M} = w_0 + \Delta C$.
\end{theorem}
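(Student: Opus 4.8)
The plan is to solve the transformed problem provided by Lemma~\ref{lem:Ybar} and then transport the solution back through the change of measure. By that lemma, \eqref{eq:2} is equivalent to maximising $\Ebq{u_p(\ol Y_T)}$ over non-negative $\bQ$-martingales $\ol Y$ with $\ol Y_0 = w_0 + \Delta C$ satisfying $\ol Y_t \ge \zeta_t$ for all $t \le T$, with $\zeta$ as in \eqref{eq:ZetaDef}. As observed after Remark~\ref{rem:restprop}, since $\ol Y$ is in particular a $\bQ$-supermartingale and $\zeta_T < 0$, the constraint $\ol Y \ge \zeta$ together with non-negativity is equivalent to $\ol Y \ge \zeta^0$, and hence (because $\zeta^*$ is the smallest $\bQ$-supermartingale dominating $\zeta^0$) to $\ol Y \ge \zeta^*$. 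The whole purpose of introducing $\bQ$ is that both the martingale property of $\ol Y$ and the expectation in the objective are now taken under the single measure $\bQ$, which is exactly the setting of~\cite{El-Karoui:2006aa}.

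First I would record the max-plus representation. Under Assumption~\ref{ass:sup-decomp} we have $\zeta_t^* = \Ebq{\sup_{t \le u \le T} J_u^\zeta \,|\, \Fc_t}$, and I would introduce the candidate family of $\bQ$-martingales
\[
  N_t^m := \Ebq{m \vee \sup_{0 \le u \le T} J_u^\zeta \,\Big|\, \Fc_t}, \qquad m \in [-\infty,\infty).
\]
Each $N^m$ dominates $\zeta^*$: since $m \vee \sup_{0 \le u \le T} J_u^\zeta \ge \sup_{t \le u \le T} J_u^\zeta$, taking $\Ebq{\,\cdot\,|\Fc_t}$ gives $N_t^m \ge \zeta_t^*$. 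Moreover $N^m$ is non-negative, because $J_T^\zeta = 0$ forces $\sup_{0 \le u \le T} J_u^\zeta \ge 0$. Thus each $N^m$ is feasible once its initial value is matched, and the content of the representation theorem of~\cite{El-Karoui:2006aa,El-Karoui:2008aa} (applied after the trivial additive shift by the constant $w_0 + \Delta C$ that reduces a prescribed initial value to $0$) is precisely that, for the concave utility $u_p$, the maximiser of $\Ebq{u_p(\ol Y_T)}$ over all non-negative $\bQ$-martingales dominating $\zeta^*$ with that initial value is the member of this family. Applying this with initial value $w_0 + \Delta C$ yields $\ol Y_T = M \vee \sup_{0 \le u \le T} J_u^\zeta$, and transferring back via $W_T^{\pi,C} = \xi_T \ol Y_T$ from Lemma~\ref{lem:Ybar} gives the claimed formula.

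It then remains to pin down feasibility. Evaluating the representation at $t=0$ and using that $\Fc_0$ is $\bQ$-trivial (the augmentation $\mathcal N(\Pc)\subset\Fc_0$ together with the fixed starting point $s_0$) gives $\zeta_0^* = \Ebq{\sup_{0 \le u \le T} J_u^\zeta}$, and any non-negative $\bQ$-martingale dominating $\zeta^*$ must start at least at $\zeta_0^*$; since $\ol Y_0 = w_0 + \Delta C$ is forced, feasibility implies $w_0 + \Delta C \ge \Ebq{\sup_{0 \le u \le T} J_u^\zeta}$. Conversely, if this inequality holds, I would use that $m \mapsto \Ebq{m \vee \sup_{0 \le u \le T} J_u^\zeta}$ is nondecreasing and continuous, decreasing to $\Ebq{\sup_{0 \le u \le T} J_u^\zeta}$ as $m \to -\infty$ and increasing to $\infty$ as $m \to \infty$, to select $M \in [-\infty,\infty)$ with $\Ebq{M \vee \sup_{0 \le u \le T} J_u^\zeta} = w_0 + \Delta C$; the associated $N^M$ is then a feasible optimiser.

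The step I expect to be the main obstacle is not the construction of the candidate but the rigorous transfer of the El-Karoui--Meziou optimality statement into the present setting: one must check that $\zeta^*$ meets the class $(\mathcal{D})$ and upper-semicontinuity-in-expectation hypotheses under $\bQ$ (guaranteed by Assumption~\ref{ass:sup-decomp}), that the power utility $u_p$ falls within the scope of their result given the integrability $\Ep{H_T^{-\delta}}<\infty$ with $\tfrac{1}{1+\delta} < p < 1$ underlying the change of measure, and that $\ol Y_T$ has finite expected $u_p$ so that the comparison is meaningful. The continuity and limiting behaviour of $m \mapsto \Ebq{m \vee \sup_{0 \le u \le T} J_u^\zeta}$, and the boundary case of equality (where $M = -\infty$, i.e.\ $\ol Y_T = \sup_{0 \le u \le T} J_u^\zeta$), are the remaining points requiring care, both handled via the class $(\mathcal{D})$ property and dominated convergence.
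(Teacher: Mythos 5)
Your proposal is correct and follows essentially the same route as the paper: reduce to the $\bQ$-martingale problem via Lemma~\ref{lem:Ybar}, identify $\Ebq{\sup_{0\le u\le T} J_u^\zeta\,|\,\Fc_t}$ as the smallest non-negative $\bQ$-martingale dominating $\zeta^*$ (giving the feasibility criterion), and invoke the El Karoui--Meziou representation/optimality theorem to obtain $\ol Y_T = M \vee \sup_{0\le u\le T} J_u^\zeta$. You also correctly calibrate $M$ through $\Ebq{M \vee \sup_{0\le u\le T} J_u^\zeta} = w_0 + \Delta C$ rather than setting $M$ equal to the initial wealth, which is precisely the point the paper flags in its footnote about the typographical error in the cited reference.
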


\begin{proof}
  Under Assumption~\ref{ass:sup-decomp}, it is immediate that $\ol{\zeta}_t := \Ebq{\sup_{0 \le u \le T} J_u^\zeta|\Fc_t}$ is a non-negative martingale which dominates $\zeta^*$. Moreover, it is the smallest such martingale. 
  
  In particular, if $\ol{\zeta}_0  \le w_0 + \Delta C$, then there exists a process $\ol{Y}$ which is feasible for \eqref{eq:5}. Moreover, since any admissible wealth process gives rise (via the arguments of Lemma~\ref{lem:Ybar}) to a non-negative martingale which dominates $\zeta^*$, this is also a necessary condition. The result now follows immediately from \cite[Theorem~5.2]{El-Karoui:2008aa} \footnote{Note that there is a typographical error in the first bullet point on p.685 of \cite{El-Karoui:2008aa}, and this point is not in general correct: in our notation, the choice of $M$ will in general be less than $w_0 + \Delta C$; we do not usually expect equality except in special cases.}
\end{proof}

It is now straightforward to deduce the form of the optimal $\pi$, using classical methods, see for example~\cite[Theorem~6.3, Corollary~6.5]{karatzas_methods_1998}.

\subsection{Long position in Call options}

In this section we consider the case where the derivative position is a long position in call options. Specifically, we suppose that the agent purchased $\lambda > 0$ units of a call option with strike $K$. As we will see, in this case the form of the optimal terminal wealth can be identified.

We first consider the fair (replication) price of a call option with strike $K$ at time $t$, given by:
\begin{align*}
 H_t^{-1} \Ep{H_T (S_T-K)_+|\Fc_t} 
  & = \xi_t \Ebq{\xi_T^{-1} (S_T-K)_+|\Fc_t} \\ 
  & = D_t^{-1} \Eq{D_T (S_T-K)_+|\Fc_t}.
\end{align*} 
In addition, from Example~\ref{ex:Intrinsic}{\it(\ref{item:1})}, we know that $\In_t(\lambda(S_T-K)_+) = \lambda \left(S_t - K \frac{D_T}{D_t}\right)_+$, and so:
\begin{align*}
  \zeta_t = D_t^{-1} \xi_t^{-1}   \left[ - \alpha-\lambda (S_t D_t - K D_T)_+ +\lambda \Eq{(S_T D_T -KD_T )_+| \Fc_t}    \right].
\end{align*}
Observing that $S_t^D:= D_t S_t$ is a $\Q$-martingale, and writing $K^D :=  D_T K$, it follows that
\begin{equation}
  \label{eq:8}
  \zeta_t =\lambda  \xi_t^{-1} D_t^{-1} \left(\Eq{L_{T}^{S^D,K^D}-L_t^{S^D,K^D}|\Fc_t}-\frac{\alpha}{\lambda}\right)
\end{equation}
where $L^{S^D,K^D}$ is the local time of the process $S^D$ at the level $K^D$. Recalling again that $\frac{\di \bQ}{\di \Q} = D_T \xi_T$, we can also write
\begin{equation*}
    \zeta_t = \Ebq{\left(\lambda \left(L_{T}^{S^D,K^D}-L_t^{S^D,K^D}\right)-\alpha\right) \xi_T^{-1} D_T^{-1} |\Fc_t}.
\end{equation*}

Let us introduce $\phi_t := \xi_t^{-1} D_t^{-1}$, so we can write 
\begin{align*}
  \left(L_{T}^{S^D,K^D}-L_t^{S^D,K^D}\right) \phi_T
  & = \left(L_{T}^{S^D,K^D}-L_t^{S^D,K^D}\right) \phi_t  + \int_t^T  \di L_s^{S^D,K^D} \int_t^T \di \phi_s
\end{align*} 
and by integration by parts,
\begin{align*}
  & \int_t^T  \di L_s^{S^D,K^D} \int_t^T \di \phi_s\\
  & \quad = \int_t^T (\phi_s-\phi_t) \,\di L_s^{S^D,K^D} + \int_t^T \left(L_{s}^{S^D,K^D}-L_t^{S^D,K^D}\right)\, \di\phi_s\\
  & \quad = \int_t^T\phi_s\, \di L_s^{S^D,K^D} - \left(L_{T}^{S^D,K^D}-L_t^{S^D,K^D}\right) \phi_t
    + \int_t^T \left(L_{s}^{S^D,K^D}-L_t^{S^D,K^D}\right) \,\di \phi_s
\end{align*}
and so 
\begin{align*}
  \left(L_{T}^{S^D,K^D}-L_t^{S^D,K^D}\right) \phi_T
  & = \int_t^T \phi_s\, \di L_s^{S^D,K^D} + \int_t^T \left(L_{s}^{S^D,K^D}-L_t^{S^D,K^D}\right) \,\di\phi_s.
\end{align*}
Since $\phi$ is a $\bQ$-martingale, it follows that 
\begin{align*}
    \Ebq{\left(L_{T}^{S^D,K^D}-L_t^{S^D,K^D}\right)\phi_T |\Fc_t} = \Ebq{\int_t^T\phi_s \di L_s^{S^D,K^D}|\Fc_t}.  
\end{align*}

To say more about the optimal strategy in this framework, we then make the following  assumption:
\begin{assumption} \label{ass:barrier_markov}
  Suppose that $S$ is a time-homogenous, Markov processes under $\bQ$, and there exists a measurable function $\phi(u)$ such that $\phi_u = \phi(u)$ when $S_u^D = K^D$.  That is, $\xi_tD_t$ does not depend on the past of the process when the discounted price and the discounted strike are equal.
\end{assumption}
Below we will see that this assumption holds for the case of a
Black-Scholes-Merton model. In particular, it follows from Assumption~\ref{ass:barrier_markov} that
\begin{align*}
  \rho(t) :=
  \begin{cases}
    \Ebq{\int_t^T \phi_s \, \di L_s^{S^D,K^D} | S_t^D = K^D}, \quad & t < T\\
    -\infty, & t \ge T
  \end{cases}
\end{align*}
is well defined.


\begin{theorem} \label{thm:long-position-call}
  Suppose that Assumption~\ref{ass:barrier_markov} holds, and in addition, $\phi$ is a Markov process and
  \begin{align}\label{def:z}
    z(u; \lambda) := \lambda \rho(u) - \alpha \phi(u)
  \end{align}
  is decreasing in $u$, for $u \in [0,T]$. Then the process  $J^\zeta$ involved in the representation of $\zeta^*$ in (\ref{eq:7}),  the Snell envelope of $\zeta^0$,
  is given by 
   \begin{align*}
    J_u^\zeta =
    \begin{cases}
      z(u; \lambda) & S_u^D = K^D\\
      0 & u = T\\
      - \infty & \text{ otherwise},
    \end{cases}
  \end{align*}
  and so
  \begin{align*}
    \sup_{0 \le u \le T} J_u^\zeta =
    \begin{cases}
      z(H_{K^D}; \lambda) \vee 0 & \quad H_{K^D} < T\\
      0 & \quad T \wedge H_{K^D} = T.
    \end{cases}
  \end{align*}
\end{theorem}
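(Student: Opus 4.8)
The plan is to identify the representing process $J^\zeta$ by checking directly that the candidate produces the Snell envelope $\zeta^*$ of $\zeta^0$, and then to invoke the uniqueness of the decomposition \eqref{eq:7} guaranteed under Assumption~\ref{ass:sup-decomp}. Write $H^t := \inf\{u \ge t : S_u^D = K^D\}$ for the first time the discounted price meets the discounted strike at or after $t$, so that $H^0 = H_{K^D}$. First I would carry out the (deterministic) computation of the running supremum: since $J^\zeta_u = -\infty$ off the contact set $\{S_u^D = K^D\}$, since $J^\zeta_T = 0$, and since $u \mapsto z(u;\lambda)$ is decreasing, the supremum over the barrier times in $[t,T)$ is attained at the first one, giving $\sup_{t \le u \le T} J^\zeta_u = (z(H^t;\lambda) \vee 0)\,\indic{H^t < T}$. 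Taking $t=0$ recovers the stated formula for $\sup_{0\le u \le T} J^\zeta_u$; the remaining task is to prove that $\ol{\zeta}_t := \Ebq{\sup_{t \le u \le T} J^\zeta_u \mid \Fc_t}$ coincides with $\zeta^*$.

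The key algebraic identity I would establish is that $\zeta$ agrees with $z$ on the contact set. Starting from the representation $\zeta_t = \lambda \Ebq{\int_t^T \phi_s \, \di L_s^{S^D,K^D} \mid \Fc_t} - \alpha \phi_t$ derived above, I would evaluate it along the stopping time $H^t$: because $L^{S^D,K^D}$ does not grow before $S^D$ reaches $K^D$, the strong Markov property together with time-homogeneity and Assumption~\ref{ass:barrier_markov} give $\Ebq{\int_{H^t}^T \phi_s \, \di L_s^{S^D,K^D} \mid \Fc_{H^t}} = \rho(H^t)$ on $\{H^t < T\}$, while $\phi_{H^t} = \phi(H^t)$ there since $S^D_{H^t} = K^D$; hence $\zeta_{H^t} = \lambda \rho(H^t) - \alpha \phi(H^t) = z(H^t;\lambda)$ on $\{H^t < T\}$. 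Using optional sampling for the $\bQ$-martingale $\phi$ and the same Markov argument at the fixed time $t$, I would likewise rewrite $\zeta_t = \Ebq{z(H^t;\lambda)\,\indic{H^t<T} \mid \Fc_t} - \alpha\,\Ebq{\phi_T\,\indic{H^t \ge T}\mid \Fc_t}$.

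From this the two inequalities follow cleanly. Since $\alpha \ge 0$ and $\phi_T > 0$, the second term above is non-positive, and $z(H^t;\lambda) \le z(H^t;\lambda)\vee 0$, so $\zeta^0_t = \zeta_t \le \ol{\zeta}_t$ for $t < T$, while $\ol{\zeta}_T = 0 = \zeta^0_T$; thus $\ol{\zeta}$ is a (non-negative) $\bQ$-supermartingale dominating $\zeta^0$, and minimality of the Snell envelope gives $\zeta^* \le \ol{\zeta}$. For the reverse inequality I would exhibit the stopping time $\tau^* := H^t\,\indic{A} + T\,\indic{A^c}$ with $A := \{H^t<T\}\cap\{z(H^t;\lambda)\ge 0\}$, which is admissible because $z(H^t;\lambda)$ is $\Fc_{H^t}$-measurable; using $\zeta_{H^t} = z(H^t;\lambda)$ one computes $\Ebq{\zeta^0_{\tau^*}\mid \Fc_t} = \Ebq{(z(H^t;\lambda)\vee 0)\,\indic{H^t<T}\mid \Fc_t} = \ol{\zeta}_t$, and since $\zeta^*$ dominates the value of every stopping rule, $\zeta^*_t \ge \ol{\zeta}_t$. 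Hence $\ol{\zeta} = \zeta^*$, and by uniqueness of the representation \eqref{eq:7} the candidate $J^\zeta$ is the representing process.

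The main obstacle is the second paragraph: the passage to the hitting time. It hinges on a careful application of the strong Markov property and Assumption~\ref{ass:barrier_markov} to convert the path-dependent local-time integral and the martingale value $\phi_{H^t}$ into the deterministic functions $\rho$ and $\phi$ evaluated at $H^t$, and on the monotonicity of $z(\cdot;\lambda)$ to ensure that only the \emph{first} contact with the barrier contributes to the running supremum. The secondary points to verify are the integrability/class~$(\mathcal D)$ conditions justifying optional sampling for $\phi$, $\zeta^0$ and $\zeta^*$, and the measurability of the contact set entering $J^\zeta$; these follow from the standing assumptions and the regularity of $\zeta^*$ posited in Assumption~\ref{ass:sup-decomp}.
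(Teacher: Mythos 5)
Your argument is correct and follows essentially the same route as the paper's proof: both identify $\zeta_{H^t}=z(H^t;\lambda)$ on $\{H^t<T\}$ via the martingale-up-to-hitting structure of $\zeta$ together with Assumption~\ref{ass:barrier_markov}, use the monotonicity of $z(\cdot;\lambda)$ to collapse the running supremum to the first contact time, and then sandwich $\ol{\zeta}_t:=\Ebq{\sup_{t\le u\le T}J_u^\zeta\mid\Fc_t}$ against $\zeta^*$ using minimality of the Snell envelope on one side and a hitting-time stopping rule on the other. The only cosmetic difference is your opening appeal to ``uniqueness'' of the representation \eqref{eq:7}, which is neither needed nor available in general; your direct verification that $\ol{\zeta}=\zeta^*$ already establishes the claim.
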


\begin{proof}
  
 Recall that the process $\zeta$ is defined in \eqref{eq:ZetaDef}, and   consider the process $J^\zeta$ defined above.
We first note that the process $\zeta$ can be described as follows.  Considering first $t=T$, we see immediately that $\zeta_T = - \alpha\phi_T$. In addition, for $t < T$, define  the  stopping time
\begin{equation}\label{def:stoppingtime}
H_{K^D}^t := \inf \{s \ge t: S_s^D = K^D\},
\end{equation}
then $(\zeta_{s \wedge H_{K^D}^t})_{s \in [t,T]}$ is a martingale, and $\zeta_t = \Ebq{\zeta_{T \wedge H_{K^D}^t}}$. 
Now, observe that $\zeta_{H_{K^D}} = z(H_{K^D}; \lambda)$ when $H_{K^D} < T$. 
Since $\zeta^*_{H_{K^D}} \ge \zeta_{H_{K^D}}$, we have that, 
  $$\zeta^*_{H_{K^D}} \ge \Ebq{\sup_{{H_{K^D}} \le u \le T} J_u^\zeta | \Fc_{H_{K^D}}} = \zeta_{{H_{K^D}}} \vee 0 =z(H_{K^D};\lambda)\vee 0, 
  $$
when $H_{K^D} < T$.  It follows that 
   $$\zeta^*_0 \ge \Ebq{\left(\zeta^*_{H_{K^D}}\vee 0 \right)\ind_{H_{K^D} < T}} \ge \Ebq{\sup_{0\le u \le T} J_u^\zeta}.$$
 
    On the other hand, by construction, $\Ebq{\sup_{t \le u \le T} J_u^\zeta | \Fc_t}$ is a non-negative supermartingale dominating $\zeta_t$, from which we conclude
    \[
      \zeta_t^* = \Ebq{\sup_{t \le u \le T} J_u^\zeta | \Fc_t},
    \]
    as required.


  %
\end{proof} 

\begin{remark} The assumption that $z$ is decreasing is necessary to get an explicit expression. In general, we would expect the process $J^\zeta$ to take a similar form, but it would no longer be the case that $\sup_{0 \le u \le T} J_u^\zeta = z(H_{K^D}; \lambda) \vee 0$, and rather, the right-hand side would be a maximum over all possible return times to the level $K^D$. In this case, we would expect the function $z$ at time $u$ to then only be defined recursively in terms of an expression involving its future values, $\{z(s; \lambda), s \in (u,T]\}$.
\end{remark}

Next result is in fact a special case of the previous more general
result.
\begin{corollary} \label{lem:rho-dec}
  Suppose that Assumption~\ref{ass:barrier_markov} holds, $\alpha = 0$, and the function $\rho(t)$ is strictly positive and decreasing. Then, we have
  \begin{align*}
    \sup_{0 \le u \le T} J_u^\zeta = \lambda \rho(H_{K^D}) \vee 0
  \end{align*}
  where $H_{K^D} := \inf \{ t \ge 0 : S_u^D = K^D\}$.
\end{corollary}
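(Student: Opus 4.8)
The plan is to read this off as the specialisation $\alpha = 0$ of Theorem~\ref{thm:long-position-call}, so that the work is essentially bookkeeping rather than new analysis. First I would substitute $\alpha = 0$ into the definition~\eqref{def:z} of $z$, which collapses it to $z(u;\lambda) = \lambda\rho(u)$; in particular the term $\alpha\phi(u)$ disappears, so the Markov hypothesis on $\phi$ that the theorem imposes plays no role here, and the monotonicity requirement on $z$ reduces to a monotonicity requirement on $\rho$ alone.

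Next I would check the remaining hypotheses of Theorem~\ref{thm:long-position-call}. Assumption~\ref{ass:barrier_markov} is in force, which is exactly what guarantees that $\rho$ is well defined. Since $\lambda > 0$ and $\rho$ is assumed strictly positive and decreasing, the function $z(u;\lambda) = \lambda\rho(u)$ is strictly positive and decreasing on $[0,T]$, so all conditions of the theorem are met and I may invoke its conclusion,
\[
\sup_{0 \le u \le T} J_u^\zeta =
\begin{cases}
  z(H_{K^D};\lambda) \vee 0, & H_{K^D} < T,\\
  0, & T \wedge H_{K^D} = T.
\end{cases}
\]
On the first branch I substitute $z(H_{K^D};\lambda) = \lambda\rho(H_{K^D})$ to obtain $\lambda\rho(H_{K^D}) \vee 0$.

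The only point requiring care — and the main, if modest, obstacle — is to fold the two branches into the single expression claimed. On the event $\{H_{K^D} \ge T\}$ I would appeal to the convention $\rho(t) = -\infty$ for $t \ge T$ already built into the definition of $\rho$, so that $\lambda\rho(H_{K^D}) = -\infty$ and hence $\lambda\rho(H_{K^D}) \vee 0 = 0$, matching the second branch of the theorem. Thus both cases coincide with $\lambda\rho(H_{K^D}) \vee 0$, which is the asserted identity.
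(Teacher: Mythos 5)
Your proposal is correct and follows essentially the same route as the paper: specialise Theorem~\ref{thm:long-position-call} by setting $\alpha=0$ so that $z(u;\lambda)=\lambda\rho(u)$, note that the monotonicity hypothesis on $z$ reduces to that on $\rho$, and read off the conclusion, with the convention $\rho(t)=-\infty$ for $t\ge T$ absorbing the $H_{K^D}\ge T$ branch. The paper's own proof is the same one-line specialisation, so no further comment is needed.
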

\begin{proof}
 Taking $z(u,\lambda)=\lambda \rho (u)$, and using the fact that $\rho$ is decreasing and strictly positive, the result follows from the expression for $J^\zeta_u$, taking $\alpha=0$, since the maximum of this process within the interval $[t,T]$ will be achieved at the valuation in the  left limit $t$.
\end{proof}

\subsection{Long positions in Call options in the Black-Scholes-Merton model}

We now restrict ourselves to the standard setting of the Black-Scholes-Merton model, so that $D_t = \me^{-rt}$, $\di S_t = \sigma S_t \di B_t + \mu S_t \dt$, for fixed constants $\sigma, r, \mu$, where $B_t$ is a $\P$-Brownian motion. In this model, we know that \[H_t = \exp\left\{-\theta B_t - \left(\half \theta^2 +r\right) t \right\},\] where $\theta = \frac{\mu-r}{\sigma}$, the Sharpe ratio.

Using the fact that $\xi_T = H_T^{-\frac{1}{p}} \left(\Ep{H_T^{1-\frac{1}{p}}}\right)^{-1}$, and $H_t \xi_t$ is a $\P$-martingale, one can further see that
\begin{equation*}
  \xi_t = \exp\left\{ \frac{\theta}{p} B_t + rt + \half \theta^2 (2 p^{-1} - p^{-2}) t\right\},
\end{equation*}
and therefore
\begin{align}
  \phi_t^{-1} = D_t\xi_t = \exp\left\{ \frac{\theta}{p} B_t^{\Q} - \half \frac{\theta^2}{p^2}t\right\}, \label{eq:11}
\end{align}
where $B_t^\Q := B_t + \theta t$ is a $\Q$-Brownian motion. It follows that $S_t^D = K^D$ if and only if:
\begin{align}
  K^D & = S_0 \exp\left\{ \sigma B_t^\Q - \half \sigma^2 t\right\} \nonumber\\
  \iff B_t^\Q & = \frac{1}{\sigma} \left[ \ln \left( \frac{K^D}{S_0}\right) + \half \sigma^2 t\right]       \label{eq:10}
\end{align}
and hence
\begin{align*}
  \phi(t) & = \exp\left\{ -\frac{\theta}{p} \frac{1}{\sigma} \left[ \ln \left( \frac{K^D}{S_0}\right) + \half \sigma^2 t\right]  + \half \frac{\theta^2}{p^2}t\right\}\\
  & = \left( \frac{S_0}{K^D}\right)^{\frac{\theta}{\sigma p}} \exp\left\{ \frac{\theta}{2p^2} \left( \theta-\sigma p\right)t\right\}.
\end{align*}
Note that it follows that Assumption~\ref{ass:barrier_markov} holds, and moreover $\phi$ is decreasing in $t$ if $\theta>0$ and $p \sigma > \theta$.


Using from above the fact that  $\frac{\di \bQ}{\di \Q} = D_T \xi_T$, which is given by \eqref{eq:11}, we also see from Girsanov's Theorem that $B_t^{\bQ} := B_t^\Q - \frac{\theta}{p} t$ is a $\bQ$-Brownian motion.

In addition, using the Black-Scholes formula, we have 
\begin{equation*}
  \Eq{L_{T}^{S^D,K^D}-L_t^{S^D,K^D}|\Fc_t} = \left( \Phi(d_1) S_t^D - \Phi(d_1-\sigma\sqrt{T-t})K^D\right) - (S_t^D-K^D)_+
\end{equation*}
where $d_1 = \left( \log\left( \frac{S_t^D}{K^D}\right) + \sigma^2 \sqrt{T-t}/2\right) /(\sigma \sqrt{T-t})$, and it follows from the argument used to derive \eqref{eq:8} that
\begin{equation*}
  \rho(t)  = \phi(t) K^D \left( 2 \Phi\left(\half \sigma \sqrt{T-t} \right) -1\right).
\end{equation*}
From the fact that $\phi$ is decreasing, we deduce that Theorem~\ref{thm:long-position-call} 
 holds when $\theta>0$ and $p \sigma > \theta$, since  in this case $z(\cdot;\lambda)$, defined by (\ref{def:z}), is decreasing. This last fact follows from the following observation, using the previous display.  
 \begin{align*}
 z(u;\lambda)=&\lambda\rho(u)-\alpha \phi(u)\\
 =&-\alpha\phi(u)\left[1-\frac{\lambda K^D}{\alpha}\left(2\Phi(\frac{1}{2}\sigma\sqrt{T-u})-1\right)\right]\\
 =:&-\alpha\phi(u)g(u),
 \end{align*}
with $g(T)=1$, $g(0)<1$ and $g'(u)>0$, for $0<u<T$.  The functions $\rho, \phi$ and $z$ are shown in Figures~\ref{fig:RhoPhi} and \ref{fig:zPlot}.


\begin{figure}
  \includegraphics[width=\textwidth]{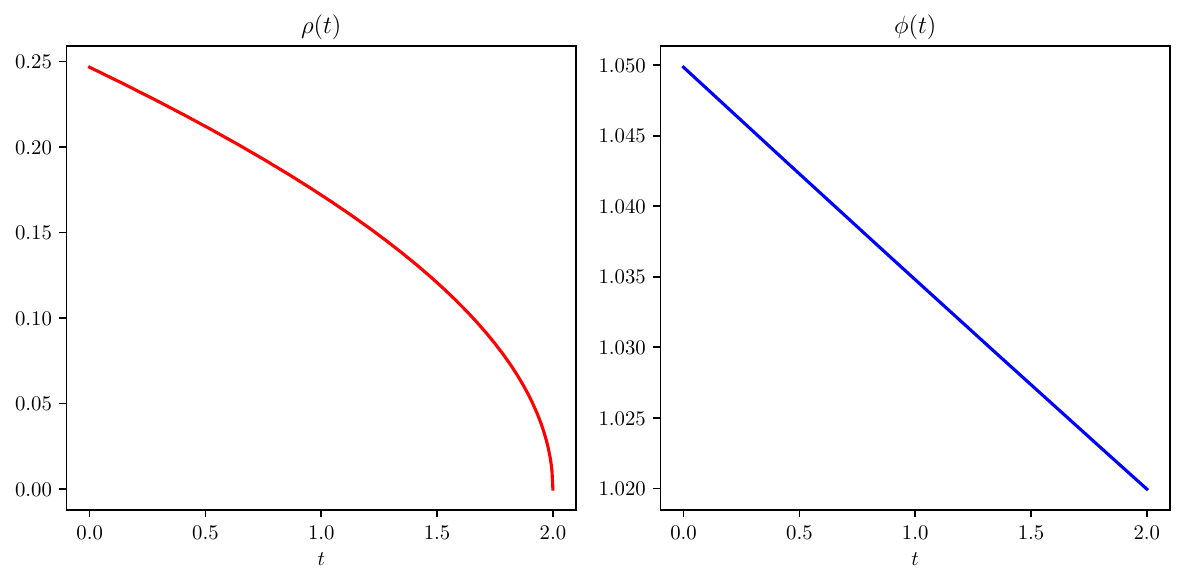}
  \caption{\label{fig:RhoPhi} Plots of the functions $\rho$ and $\phi$ in the case of the Black-Scholes-Merton model. In this example, $\rho$ represents the additional cost of hedging that needs to be held when $S_t^D = K^D$, $\phi$ is the value of the process $\phi_t$ along the same line.}
\end{figure}

\begin{figure}
  \includegraphics[width=\textwidth]{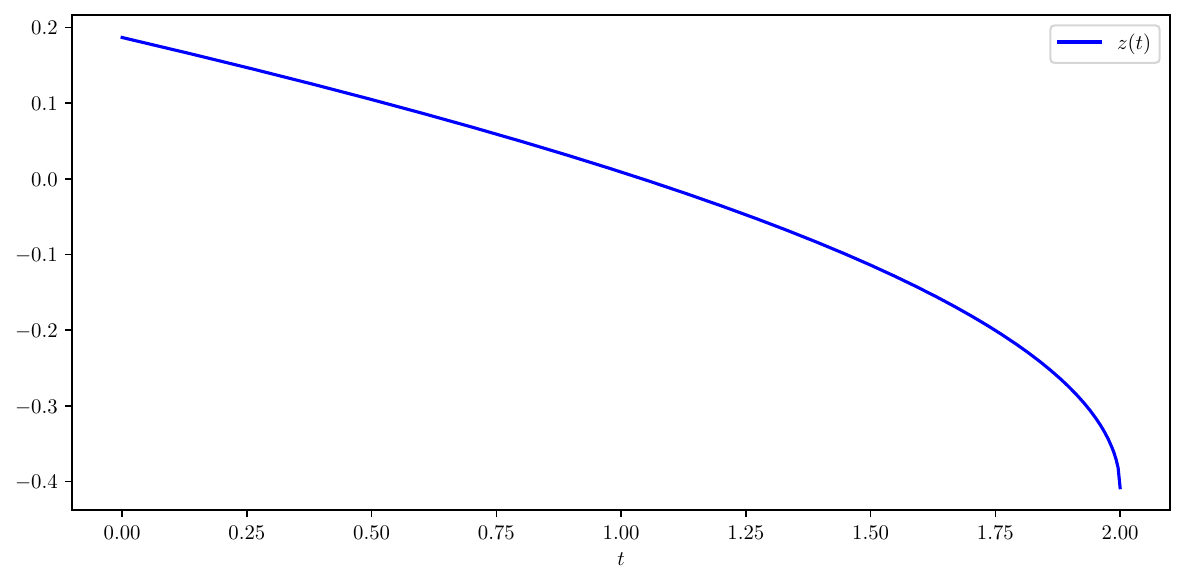}
  \caption{\label{fig:zPlot} The plot above shows the function $z(t,\lambda)$. The level $z$ represents the value of the process $J^\zeta$ which we will score when we hit the line $S_t^D = K^D$ in order to get the correct form of the process defined in \eqref{eq:7}.}
\end{figure}

For $y \neq x$, $\beta \in \R$, and for a standard Brownian motion $B$ with $B_0 = 0$, we introduce the hitting time
\begin{align*}
  H_{y}^{\beta} := \inf \{ t \ge 0 :  x+ B_t = y + \beta t\}
\end{align*}
and define the densities $\gamma_0, \gamma_1^\beta, \gamma_2^\beta$ by:
\begin{align}
  \P(x+ B_t \in A) & = \int_A \gamma_0(v,t,x) \, \di v \nonumber\\
  \label{eq:BrownianHittingLine}
  \P(H_y^\beta < t) & = \int_0^t \gamma_1^\beta(u,x,y) \, \di u\\
  \P(H_y^\beta > t, x+ B_t \in A)  & = \int_A \gamma_2^\beta(v,t,x,y) \, \di v
                                  \label{eq:BrownianNotHitting}
\end{align}

\begin{proposition} \label{prop:BSM_M}
  In the Black-Scholes-Merton model with $\theta >0, p \sigma > \theta$, $z$ decreasing and $S_0 \neq K^D$, we can write 
  \begin{align}
   \Ebq{\left( \sup_{0 \le u \le T} J_u^\zeta\right) \vee M}
    & = M + \int_0^T \gamma_1^\beta(u,x,y)  (z(u; \lambda) - M)_+  \, \di u \nonumber
    \\
    & = M + \int_0^{r^*(M;\lambda)} \gamma_1^\beta(u,x,y)  (z(u; \lambda) - M) \, \di u
    \label{eq:12}
  \end{align}
  where $x = \sigma^{-1} \log(S_0), y = \sigma^{-1} \log(K^D), \beta = \sigma/2-\theta/p$, and
  \begin{equation*}
    r^*(M;\lambda) := \inf\{ u < T : z(u; \lambda) < M\} \wedge T.
  \end{equation*}
\end{proposition}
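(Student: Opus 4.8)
The plan is to reduce everything to the law of the first hitting time $H_{K^D}$ under $\bQ$ and then integrate against its density. First I would invoke Theorem~\ref{thm:long-position-call}, whose hypotheses ($\theta>0$, $p\sigma>\theta$, $z$ decreasing) are in force, to replace the random variable $\sup_{0\le u\le T}J_u^\zeta$ by the explicit expression $z(H_{K^D};\lambda)\vee 0$ on $\{H_{K^D}<T\}$ and $0$ on $\{H_{K^D}\ge T\}$. In particular $\sup_{0\le u\le T}J_u^\zeta\ge 0$ almost surely, a fact I will use below.

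Next I would identify the $\bQ$-law of $H_{K^D}$. Starting from the level-crossing description \eqref{eq:10}, namely that $S_t^D=K^D$ is equivalent to $B_t^\Q=\sigma^{-1}\log(K^D/S_0)+\tfrac12\sigma t$, I substitute the Girsanov shift $B_t^\Q=B_t^{\bQ}+\tfrac{\theta}{p}t$, where $B^{\bQ}$ is a $\bQ$-Brownian motion. This rewrites the crossing condition as $x+B_t^{\bQ}=y+\beta t$ with $x=\sigma^{-1}\log S_0$, $y=\sigma^{-1}\log K^D$ and $\beta=\sigma/2-\theta/p$, which is exactly the event defining $H_y^\beta$ for the standard Brownian motion $B^{\bQ}$ started at $0$. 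Hence $H_{K^D}$ has, under $\bQ$, the law of $H_y^\beta$, and by \eqref{eq:BrownianHittingLine} its (possibly defective) density is $u\mapsto\gamma_1^\beta(u,x,y)$.

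With these two ingredients the computation is short. Writing $A:=\sup_{0\le u\le T}J_u^\zeta$, I use $A\vee M=M+(A-M)_+$. Since $A\ge0$ and, in the regime relevant to Theorem~\ref{thm:main_decomp} (where $w_0+\Delta C\ge\Ebq{A}$), the constant $M$ is non-negative, the inner maximum with $0$ is absorbed: on $\{H_{K^D}\ge T\}$ one has $(A-M)_+=0$, while on $\{H_{K^D}<T\}$ one has $(A-M)_+=(z(H_{K^D};\lambda)-M)_+$. Taking $\bQ$-expectations and integrating against $\gamma_1^\beta(\cdot,x,y)$ yields the first displayed equality. For the second equality I would use that $z(\cdot;\lambda)$ is decreasing: the integrand $(z(u;\lambda)-M)_+$ equals $z(u;\lambda)-M$ for $u<r^*(M;\lambda)$ and vanishes for $u\ge r^*(M;\lambda)$, so the upper limit can be replaced by $r^*(M;\lambda)$ and the positive part dropped.

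I expect the only genuinely delicate step to be the identification of the $\bQ$-law of $H_{K^D}$: one must track the two successive changes of measure ($\P\to\Q\to\bQ$) and the resulting deterministic drift $\beta$ correctly, and match the crossing event with the definition of $\gamma_1^\beta$. The remaining bookkeeping—the absorption of the inner $\vee 0$, which relies on $M\ge0$, and the monotonicity truncation—is routine.
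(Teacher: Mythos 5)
Your proposal is correct and follows essentially the same route as the paper's proof: invoke Theorem~\ref{thm:long-position-call} to write $\sup_{0\le u\le T}J_u^\zeta=z(H_{K^D};\lambda)\vee 0$, identify the $\bQ$-law of $H_{K^D}$ via \eqref{eq:10} and the Girsanov shift $B^{\bQ}_t=B^{\Q}_t-\tfrac{\theta}{p}t$ as that of $H_y^\beta$ with density $\gamma_1^\beta(\cdot,x,y)$, and then use $A\vee M=M+(A-M)_+$ together with the monotonicity of $z$ to get the truncated integral. Your explicit remark that absorbing the inner $\vee\,0$ requires $M\ge 0$ is a point the paper leaves implicit, but it does not change the argument.
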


\begin{proof}
  First note that the parameters $x, y$ and $\beta$ come from the relevant terms in \eqref{eq:10} together with $B_t^{\bQ} := B_t^\Q - \frac{\theta}{p} t$.
  
  From Theorem~\ref{thm:long-position-call} we observe that
  \begin{align*}
    \sup_{0 \le u \le T} J_u^\zeta = z(H_{K^D};\lambda) \vee 0
  \end{align*}
  holds, and therefore 
  \begin{align*}
    \sup_{0 \le u \le T} J_u^\zeta \vee M = M + (z(H_{K^D};\lambda) - M)_+.
  \end{align*}
  The result now follows upon noting that $z$ is decreasing and the observation that $H_{K^D}$ has distribution given by $\gamma_1^\beta$.
\end{proof}


We now put together the results of this section to give a complete characterisation of the optimal wealth in the case of a long position in call options.

\begin{theorem}
  Suppose the conditions of Proposition~\ref{prop:BSM_M} and Theorem~\ref{thm:main_decomp} hold.
  Then there exists an admissible trading strategy which is long $\lambda$ units of the Call option with strike $K$ if and only if
  \[
    w_0 + \lambda \Delta C  \ge \int_0^{r^*(0;\lambda)} \gamma_1^\beta(u,x,y)  z(u; \lambda) \, \di u.
  \]
  If this holds then the value of $M$ in the trader's optimal portfolio is the unique solution to the equation
  \begin{align}
    w_0 + \lambda \Delta C & = M + \int _0^{r^*(M;\lambda)} \gamma_1^\beta(u,x,y) \left\{z(u;\lambda)-M\right\} \, \di u, \label{eq:M-BSM-Simplified}
  \end{align}
  and the optimal utility is given by
  \begin{equation*}
     \Ep{u_p(W_T^{\pi,C})} = c_p \cdot \left( u_p(M) + \int _0^{r^*(M;\lambda)} \gamma_1^\beta(s,x,y) \left\{u_p(z(s;\lambda))-u_p(M)\right\} \, \di s\right),
  \end{equation*}
  where $M:= M(\lambda)$ is the value given by~\eqref{eq:M-BSM-Simplified} and $c_p = \left(\Ep{H_T^{1-\frac{1}{p}}}\right)^{p}$.

\end{theorem}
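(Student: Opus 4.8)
The plan is to assemble the two main results already established in this section, Theorem~\ref{thm:main_decomp} and Proposition~\ref{prop:BSM_M}, feeding into them the explicit Black--Scholes computation of $\sup_{0\le u\le T}J_u^\zeta$ from Theorem~\ref{thm:long-position-call}. Note first that for the $\lambda$-unit position the relevant budget in Theorem~\ref{thm:main_decomp} is $w_0+\lambda\Delta C$, since $C_T=\lambda(S_T-K)_+$ scales the mispricing linearly. First I would establish the existence criterion: by Theorem~\ref{thm:main_decomp} a feasible solution exists iff $\Ebq{\sup_{0\le u\le T}J_u^\zeta}\le w_0+\lambda\Delta C$, and since $\sup_{0\le u\le T}J_u^\zeta\ge 0$ one may evaluate $\Ebq{\sup_{0\le u\le T}J_u^\zeta}$ by setting $M=0$ in \eqref{eq:12}, which gives exactly $\int_0^{r^*(0;\lambda)}\gamma_1^\beta(u,x,y)\,z(u;\lambda)\,\di u$ (here $z(u;\lambda)\ge 0$ precisely for $u\le r^*(0;\lambda)$, as $z$ is decreasing with $z(T^-;\lambda)=-\alpha\phi(T)\le 0$). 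This yields the stated if-and-only-if condition.

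Next I would pin down $M$. Theorem~\ref{thm:main_decomp} characterises $M$ as the solution of $\Ebq{(\sup_{0\le u\le T}J_u^\zeta)\vee M}=w_0+\lambda\Delta C$, and substituting the closed form \eqref{eq:12} turns this into \eqref{eq:M-BSM-Simplified}. To obtain uniqueness I would show the right-hand side of \eqref{eq:M-BSM-Simplified}, call it $F(M)$, is strictly increasing on $[0,\infty)$. The cleanest route is to note $F(M)=\Ebq{(\sup_{0\le u\le T}J_u^\zeta)\vee M}$, whose $M$-derivative is $\bQ(\sup_{0\le u\le T}J_u^\zeta<M)$; equivalently, differentiating the integral representation, the boundary term vanishes because the integrand $\gamma_1^\beta(u,x,y)(z(u;\lambda)-M)$ is zero at the upper limit $u=r^*(M;\lambda)$ (where $z(r^*(M;\lambda);\lambda)=M$ by continuity), leaving $F'(M)=1-\int_0^{r^*(M;\lambda)}\gamma_1^\beta(u,x,y)\,\di u=\bQ(H_{K^D}\ge r^*(M;\lambda))>0$. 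Continuity and strict monotonicity give a unique root, and the existence criterion $w_0+\lambda\Delta C\ge F(0)$ forces $M\ge 0$, which is exactly the regime in which \eqref{eq:12} is valid.

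Finally I would compute the optimal utility. Theorem~\ref{thm:main_decomp} gives the optimal terminal wealth $W_T^{\pi,C}=\xi_T[(\sup_{0\le u\le T}J_u^\zeta)\vee M]$, while Lemma~\ref{lem:Ybar} shows $W_T^{\pi,C}=\xi_T\ol{Y}_T$ and, crucially, that $\Ep{u_p(W_T^{\pi,C})}=c_p\,\Ebq{u_p(\ol{Y}_T)}$ with $c_p=(\Ep{H_T^{1-1/p}})^p$. Hence it remains to evaluate $\Ebq{u_p((\sup_{0\le u\le T}J_u^\zeta)\vee M)}$. Since $u_p$ is increasing, $u_p(a\vee M)=u_p(M)+(u_p(a)-u_p(M))_+$; using Theorem~\ref{thm:long-position-call} that $\sup_{0\le u\le T}J_u^\zeta=z(H_{K^D};\lambda)\vee 0$, the fact that on $\{H_{K^D}<r^*(M;\lambda)\}$ one has $\sup_{0\le u\le T}J_u^\zeta=z(H_{K^D};\lambda)>M$ (and $\le M$ otherwise, since $z$ is decreasing), and integrating against the $\bQ$-law $\gamma_1^\beta$ of $H_{K^D}$ from Proposition~\ref{prop:BSM_M}, gives $\Ebq{u_p(\ol{Y}_T)}=u_p(M)+\int_0^{r^*(M;\lambda)}\gamma_1^\beta(s,x,y)\{u_p(z(s;\lambda))-u_p(M)\}\,\di s$. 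Multiplying by $c_p$ produces the claimed utility.

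I expect the main obstacle to be bookkeeping rather than conceptual: the two sources of subtlety are (i) justifying the strict monotonicity of $F$ cleanly, including that feasibility forces $M\ge 0$ so that the $\vee\,0$ implicit in $\sup_{0\le u\le T}J_u^\zeta$ collapses correctly and \eqref{eq:12} applies; and (ii) tracking the change of measure in Lemma~\ref{lem:Ybar} so that the $\P$-utility is expressed as a $\bQ$-expectation, in which the hitting-time density $\gamma_1^\beta$ --- derived for the $\bQ$-Brownian motion $B^{\bQ}$ --- is the correct law of $H_{K^D}$.
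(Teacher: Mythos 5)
Your proposal is correct and follows essentially the same route as the paper's (much terser) proof: the existence criterion and the equation for $M$ come from combining Theorem~\ref{thm:main_decomp} with Proposition~\ref{prop:BSM_M}, uniqueness of $M$ rests on the strict monotonicity of $M\mapsto\Ebq{(\sup_{0\le u\le T}J_u^\zeta)\vee M}$, which is positive precisely because $\bQ(H_{K^D}\ge T)>0$ when $S_0\neq K^D$, and the utility formula follows from the law of $(\sup_{0\le u\le T}J_u^\zeta)\vee M$ together with the change-of-measure identity $\Ep{u_p(W_T^{\pi,C})}=c_p\,\Ebq{u_p(\ol{Y}_T)}$ from Lemma~\ref{lem:Ybar}. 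Your write-up simply supplies more of the bookkeeping (the explicit derivative $F'(M)=\bQ(H_{K^D}\ge r^*(M;\lambda))$ and the check that feasibility forces $M\ge 0$) than the paper records.
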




\begin{proof}
  The first claim follows from Proposition~\ref{prop:BSM_M} and Theorem~\ref{thm:main_decomp}. It also follows from this and the fact that with positive probability, since $S_0 \neq K^D$, $H_{K^D} \ge T$ that a unique value of $M$ satisfying \eqref{eq:M-BSM-Simplified} exists.
  
  The form of the optimal utility now follows from applying the known distribution of $\left( \sup_{0 \le u \le T} J_u^\zeta\right) \vee M$, and the arguments of Lemma~\ref{lem:Ybar}.
\end{proof}

\begin{remark}
  In general the expression for $\gamma_1^\beta$ does not exist in closed form (unless $\beta = 0$), however it is well known that its Laplace transform can be given in closed form. In combination with the fact that the right hand side of \eqref{eq:M-BSM-Simplified} is increasing in $M$, this means that the optimal value of $M$ can be found quickly via simple numerical methods.
\end{remark}

\begin{remark}
  In the case where the initial value of the asset is equal to the at-the-money strike, $S_0 = K^D$, then $H_{K^D} \equiv 0$ and so the resulting $\sup_{0 \le u \le T} J^\zeta_u$ is fixed. Hence the optimal wealth will also be deterministic, and there hence this case is trivial. 
\end{remark}

\subsection{Numerical Results}

In the context of the above results, it is possible now to numerically compute various relevant quantities to get a sense of the typical behaviour. We show the results of such numerical computations below.

\begin{figure}
  \includegraphics[width=\textwidth]{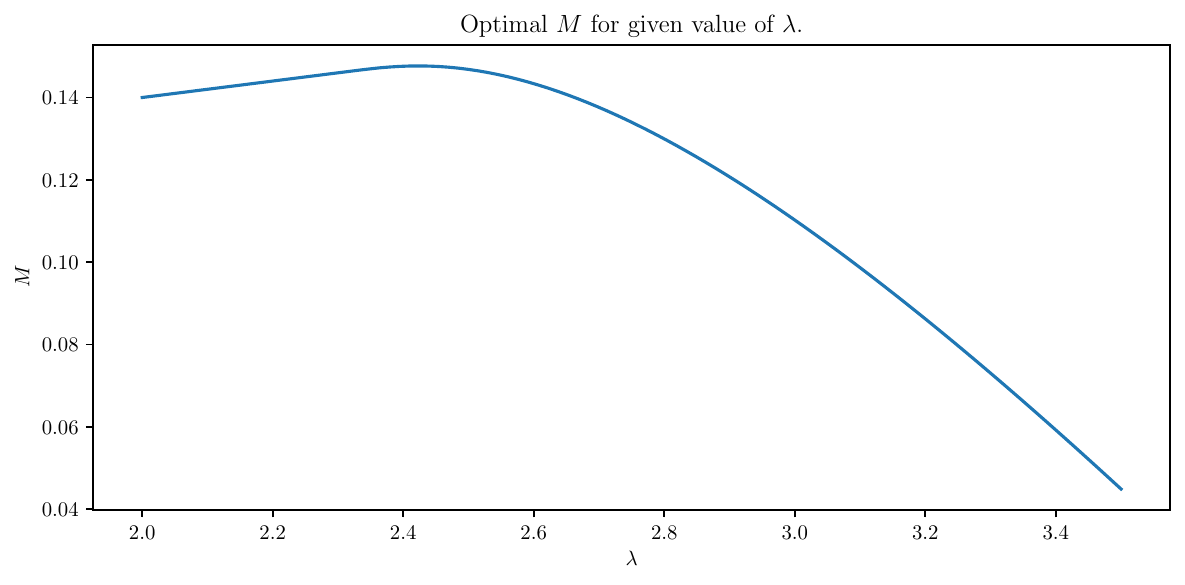}
  \caption{\label{fig:MPlot} The figure shows the value of $M$ as a function of $\lambda$. The parameters used are $S_0 = 1.2, K = 0.85, T=2, \sigma = 0.5, r = 0.01, \alpha = 0.4, \Delta C = 0.02, p=0.75, \theta = 0.05$.}
\end{figure}

\begin{figure}
  \includegraphics[width=\textwidth]{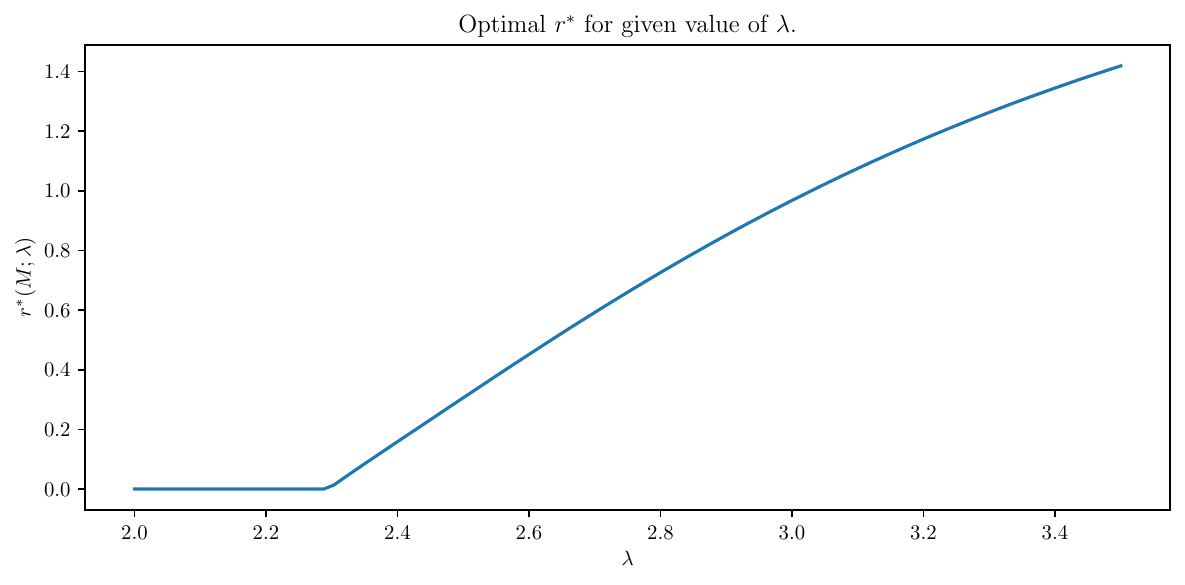}
  \caption{\label{fig:RStarPlot} The figure shows the value of $r^*(M(\lambda),\lambda)$ as a function of $\lambda$.  The parameters are the same as in Figure~\ref{fig:MPlot}. Note that for $\lambda$ small, then $M(\lambda) \ge z(0;\lambda)$, the intermediate wealth constraint is never binding, and $r^*(M,\lambda) = 0$.}
\end{figure}

\begin{figure}
  \includegraphics[width=\textwidth]{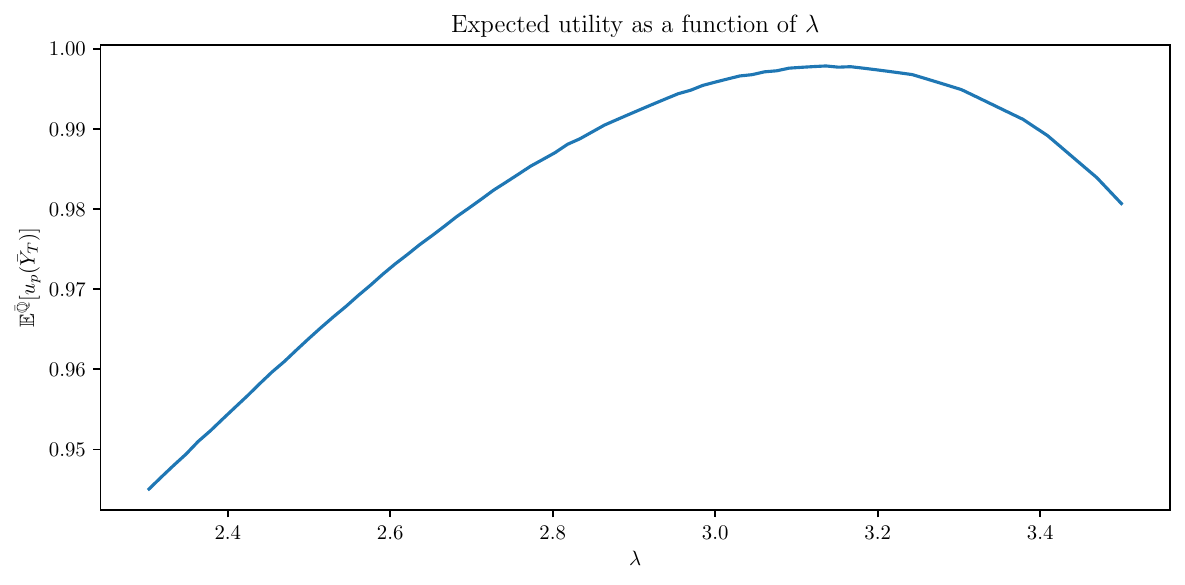}
  \caption{\label{fig:UtilityPlot} The figure shows the value of the utility as a function of $\lambda$. The parameters are the same as in Figure~\ref{fig:MPlot}. In this case, the optimal utility occurs when $\lambda \approx 3.1$.}
\end{figure}

We observe that $M$ appears to be concave in $\lambda$, and we see similar behaviour when we plot the utility, where the optimal value utility is attained for $\lambda$ approximately equal to $3.1$. The value of $r^*(M,\lambda)$ is increasing in $\lambda$, although there is an initial interval where it is equal to zero, since $M(\lambda) \ge z(0;\lambda)$, and the intermediate wealth constraint is never binding. Note that this interval also corresponds to linear behaviour for the value of $M$ as a function of $\lambda$, since here we can directly hedge the exposure, and we always guarantee a terminal wealth equal to $w_0 + \lambda \Delta C \equiv M$.

The explanation for this behaviour can be seen in Figure~\ref{fig:CDF_YT}, which shows the cumulative distribution function of the optimal wealth $Y_T$ for the optimal strategy. We see that as we increase $\lambda$, the mean of the distribution increases, but the variance also increases. The large jump on the left of each distribution function corresponds to the value of $M$, and for small values of $\lambda$, then this step corresponds to a large proportion of the distribution. As $\lambda$ increases, the proportion of the distribution that is at $M$ decreases, and the distribution becomes more spread out. As we increase $\lambda$, the mean of the distribution also increases. The optimal choice of $\lambda$ is then determined through a trade-off between the mean and variability of the resulting distribution.

\begin{figure}
  \includegraphics[width=\textwidth]{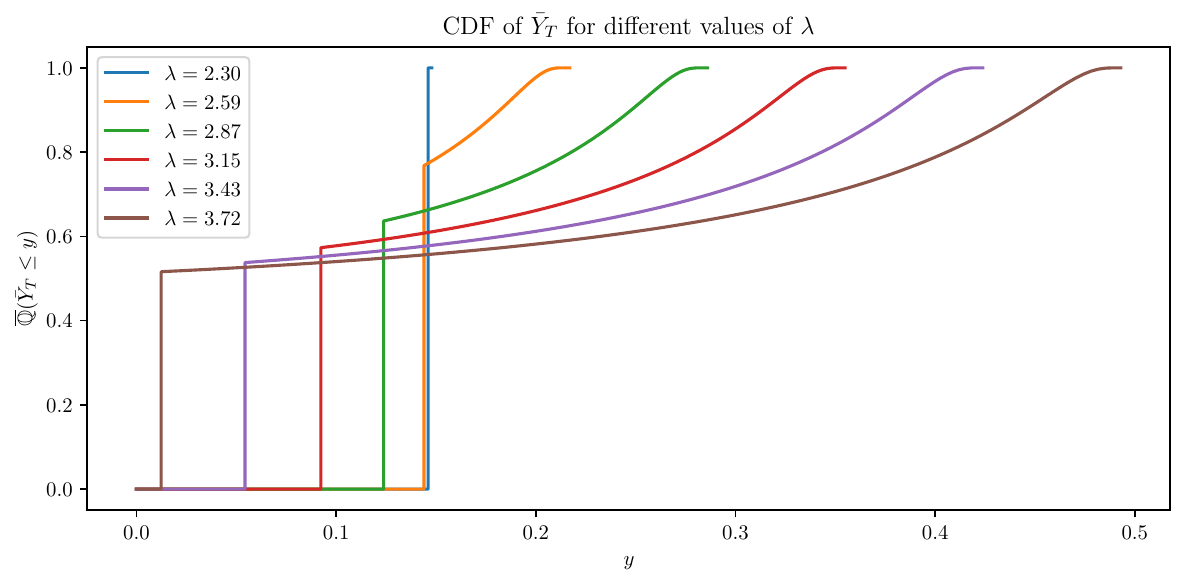}
  \caption{\label{fig:CDF_YT} The figure shows the cumulative distribution function of the optimal wealth $Y_T$ for the optimal strategy. In this case, different curves show the effect of different values of $\lambda$. The parameters are the same as in Figure~\ref{fig:MPlot}.}
\end{figure}

\section{Effect of static/dynamic hedging under intrinsic wealth constraints: One-touch options}\label{sec:one-touch}

In this section we consider the impact of intrinsic wealth constraints when hedging a path-dependent option. Specifically, in the case of a one-touch option, we are able to compare the case of hedging dynamically using the underlying asset alone, with the case of completing the hedge using a static position in vanilla options. In this case, we exploit the classical example of a static superhedge of a one-touch option using vanilla options and dynamic trading in the underlying asset due to Hobson~\cite{Hobson:1998aa}. As we will see, there is strong numerical evidence that even in a complete market setting, where the vanilla options are priced at their replication price, the impact of the static position on the intrinsic wealth constraint is significant, and leads to notable benefits to the trader who looks to sell a one-touch option.

\subsection{Max-plus representation for the one-touch option}

We begin by recalling the definition of the one-touch option, and Hobson's static superhedging strategy.

A one-touch option is a path-dependent option which pays out a fixed amount if the underlying asset price hits a pre-specified barrier level at any time before maturity. For simplicity, we suppose the barrier crossing is determined in forward price units, and our discount process is $D_t=e^{-rt}$, with $r$ a positive constant, under the hypotheses stated in the previous section. From Example~\ref{ex:Intrinsic}-\ref{item:3}, the \emph{one-touch} option has payoff $\OT^B_T := \ind_{\{S_T^* \ge B\}}$, where $S_t^* = \sup_{r \le t} S_r$ is the maximum process and $B >0$ is a fixed barrier. Recall that the discounted version of $S$ and $B$ is denoted by $S^D_t=D_tS_t$  and $B^D=D_T B$. Define $H_{B}=\inf\{t\geq 0: S_t^D\geq B^D\}$ and the one-touch option can alternatively be written as
\begin{align*}
    C^0_T&=\ind_{\{S_t\geq   BD_t^{-1}D_T,\;\text{for\;some}\;  t\in[0,T]\}}\\
    &=\ind_{\{S_t^D\geq   B^D,\;\text{for\;some}\;  t\in[0,T]\}}.
\end{align*}

Hobson's superhedging strategy for the one-touch option consists of a short position in the one-touch option itself, combined with a static position in vanilla Call options and dynamic trading in the underlying asset. Specifically, given  $K<B$ fixed, we can consider the portfolio with a payoff $\tilde{C}^0_T$ composed by a long position of  $\frac{1}{B-K}$ Call options with strike $K$  and, if $S^D$ hits $B^D$ before terminal time $T$, short sell $\frac{1}{B-K}$ units of asset, that is, 
\begin{align*}
\tilde{C}^0_T=\frac{1}{B-K}(S_T-K)_+ +\frac{1}{B-K}\{(BD_TD^{-1}_{H_B})D_{H_B}D_T^{_-1}-S_T\}\ind_{\{H_{B}\leq T\}}.
\end{align*}
Then, putting together the above expressions, the portfolio value at time $T$ is given by
\begin{align*}
\tilde{C}^0_T-C^0_T&=\frac{1}{B-K}[(S_T-K)_+ +(B-S_T)\ind_{H_B\leq T}]-\ind_{\{H_B\leq T\}}\\
&=
\begin{cases}
\frac{1}{B-K}(S_T-K)_+,\;\;&H_B>T\\
\frac{1}{B-K}(K-S_T)_+,\;\;&H_B\leq T.
\end{cases}
\end{align*}
Since the right-hand side is always non-negative, we deduce that $\tilde{C}^0_T$ is a superhedge for the one-touch option.
 
In \cite{Hobson:1998aa} it is shown moreover that the superhedging strategy is optimal in the sense that there exists a model under which $\tilde{C}^0_T = C^0_T$ almost surely, and hence the cost of the superhedge is equal to the arbitrage-free price of the one-touch option. Note that the choice of $K$ is not fixed, but there is a choice of $K$ which minimises the cost of the superhedge, and this choice of $K$ corresponds to the model which attains equality.

 Notice that from this expression we can deduce that its intrinsic value at time $t$ has the form
 \begin{align*}
 \In_t(\tilde{C}^0_T-C^0_T) &=
 \begin{cases}
 \frac{D_T}{D_t}\cdot\frac{1}{B-K}(K-\frac{D_t}{D_T}S_t)_+,\;\;&H_B\leq t \leq T\\
 0, \;\;&t<H_B\wedge T 
\end{cases}\\
&=
\begin{cases}
  \frac{1}{B-K}(K\frac{D_T}{D_t}-S_t)_+,\;\;&H_B\leq t\leq T\\
0, \;\;&t<H_B\wedge T ,
\end{cases}
\end{align*}
and hence
\begin{align*}
D_t \In_t(\tilde{C}^0_T-C^0_T) &=
\begin{cases}
  \frac{1}{B-K}(K^D-S^D_t)_+,\;\;&H_B\leq t\leq T\\
0, \;\;&t<H_B\wedge T,
\end{cases}
 \end{align*}
 which is clearly a $\Q$-submartingale.
 Then,  the intrinsic wealth constraint process (see (\ref{eq:ZetaDef})),   is given by
  \begin{equation} \label{eq:ZetaDef2}
  \zeta_t := -\alpha D_t^{-1}\xi_t^{-1} - \xi_t^{-1} \In_t(\tilde{C}^0_T-C^0_T) + \Ebq{ \xi_T^{-1} (\tilde{C}^0_T-C^0_T) | \Fc_t},\;\;\;t\in[0,T),
  \end{equation}
with $\xi_t$ as in (\ref{eq:4}). Following the same arguments given in Remark \ref{rem:restprop}, we get that $ \zeta_t$ is a $\bQ$-supermartingale, and hence
$$
\hat{\zeta}_t:=D_t\xi_t\zeta_t=-\alpha-D_t \In_t(\tilde{C}^0_T-C^0_T) + \Eq{ D_T (\tilde{C}^0_T-C^0_T) | \Fc_t},\;\;t\in[0,T],
$$
is a $\Q$-supermartingale  with terminal condition $\hat{\zeta}_T=-\alpha$. To include the non-negativity constraint on the intrinsic wealth at the terminal time, as above we define 
 \begin{align*}
    \hat{\zeta}_t^0 &:=\hat{\zeta}_t  \ind_{\{t<T\}}=
    \begin{cases}
      \hat{\zeta}_t ,& t < T\\
      0, & t = T.
    \end{cases}
\end{align*}
 From the previous calculations of the intrinsic value of the derivative, for $t<T$, we have that
   \begin{align*}
   \hat{\zeta}_t^0  =&-\alpha- \frac{1}{B-K}(K^D-S^D_t)_+\ind_{\{H_B\leq t< T\}}\\
  &+ \frac{D_T}{B-K}\left\{\Eq{(S_T-K)_+ \ind_{\{H_B> T\}}+(K-S_T)_+\ind_{\{H_B\leq T\}}| \Fc_t}\right\}
   \end{align*}

As a first step, we provide a Max-plus representation for the Snell envelope of  $ \hat{\zeta}_t^0$, denoted as $ \hat{\zeta}_t^{0,*}$, under measure $\Q$, meaning that there exists a process $J_u^\zeta$ such that
\[
 \hat{\zeta}_t^{0,*}=\Eq{\sup_{t \le u \le T} J_u^\zeta | \Fc_t}.
 \]
  Note that the Snell envelope of $ \hat{\zeta}^0$ is equal to the Snell envelope of $ \hat{\zeta}\vee 0$,  and using the last part of Lemma \ref{lemma:A1}, we can get the representation for the later once we have the one for  $\hat{\zeta}$. The ideas to do this have been already implemented in the proof of Theorem \ref{thm:long-position-call}, and 
  Corollary \ref{lem:rho-dec} for the case when $\alpha=0$,  therefore, we will simply outline the line of argument that should be followed.

First, using the fact that $\hat{\zeta}$ is a supermartingale, its Max-plus representation can be obtained using the first part of  Lemma   \ref{lemma:A1}, using the analogous  class of stopping times as in (\ref{def:stoppingtime}). Defining 
$$
\varphi_{\Q}(u)=\frac{1}{B-K}
\Eq{(K^D-S^D_T)_+\;|\;S_u^D=K^D}-\alpha,
$$
it is given by
$$
  J_u^{\hat{\zeta}}=
   \begin{cases}
   -\alpha, &u=T\\
   \varphi_{\Q}(u), &u=H^*<T,\\
   -\infty, &\text{otherwise}.
   \end{cases}
$$

From here, we can now use   Lemma   \ref{lemma:A1} part (ii), to get the representation for  $ \hat{\zeta}_t^{0,*}$,
$$
  J_u^{\zeta^{0,*}}=
   \begin{cases}
   \left(-\alpha+\frac{D_T}{B-K}(S_T-K)_+ \right)_+ \ind_{\{H_B > T\}}, &u=T\\
   \varphi_{\Q}^*(u), &H_B < u<T, S_u^D=K^D\\
   -\infty, &\text{otherwise},
   \end{cases}
$$
where 
$$
\varphi_{\Q}^*(u) = \max\left\{\varphi_{\Q}(u),0\right\}.
$$

In this case we can apply Lemma~\ref{lemma:A1} part (i) with $\tau_t := \inf\{u \ge t : S_u^D = K^D, H_B \le u < T\} \wedge T$.

 \begin{remark}
Notice that  the Max-plus representation is needed  under measure $\bQ$, and it can be obtained using Lemma \ref{lem:MaxPlusLemma} part (ii), applying to the corresponding Radon-Nikodym derivative martingale $\{D_t\xi_t\}$; see Remark \ref{rem:restprop}. This result was not needed in the proof of Theorem \ref{thm:long-position-call}
because the process $\{\phi_t := \xi_t^{-1} D_t^{-1}\}$ was implicit in all the calculations where it was involved; see, in particular,  \ref{def:z}, so that we could obtain the representation under  the measure  $\bQ$.
 
\end{remark}

 

\subsection{Numerical results}

We now present some numerical results which illustrate the impact of the intrinsic wealth constraint when hedging a one-touch option, comparing the case where the trader uses only dynamic trading in the underlying asset, with the case where the trader also holds a static position in vanilla Call options as described above.

Using the max-plus representation described above, we can numerically compute the optimal terminal portfolio wealth for the trader in both cases. Computing this requires numerical evaluation of hitting time densities for Brownian motion with drift, which can be done using standard numerical methods for inversion of the Laplace transform. 

In the numerical results below, we consider a trader who sells one-touch options with barrier $B=1.9$ and maturity $T=2$, in a Black-Scholes-Merton model with parameters $S_0=1.2, r=0.01, \sigma=0.5, \theta = 0.05$. The trader has initial wealth $w_0=0.1$, risk aversion parameter $p=0.75$, and intrinsic wealth constraint parameter $\alpha=0.1$. In this model, the arbitrage-free price of the one-touch option is approximately $0.41$.

Since we believe the trader will benefit from holding a short position in the one-touch option, we consider the case where the trader is able to sell the one-touch option for a premium to the cost of dynamically replicating. Specifically, we suppose that the trader is able to sell the one-touch option for a premium of $\Delta C = 0.02$ above the replication cost.

\begin{figure}
    \centering
    \includegraphics[width=\textwidth]{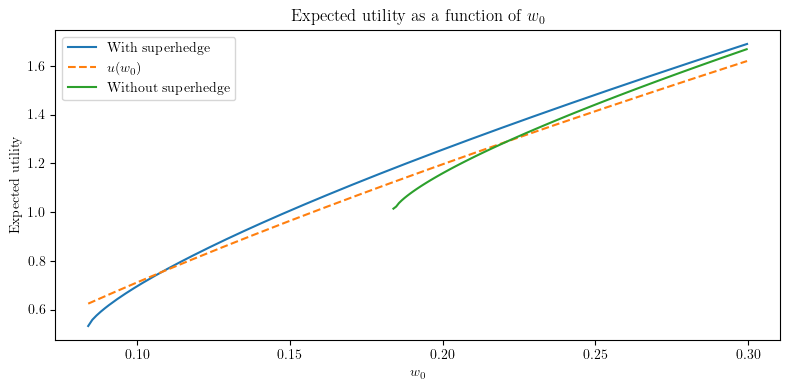}
    \caption{\label{fig:OT_UtilityPlot} The figure shows the expected utility of the trader as a function of the trader's initial wealth, $w_0$. The solid blue line shows the expected utility when the trader holds a static position in vanilla Call options as part of a superhedging strategy for the one-touch option, while the solid green line shows the expected utility when the trader does not hold the static position. The dashed orange line shows the utility of the trader if they were not to sell the one-touch option at all. Here the superhedge uses call options with strike $K=0.1.3$. }
\end{figure}

In Figure~\ref{fig:OT_UtilityPlot}, we can see the expected utility of the trader under different scenarios. We observe that when the trader's initial wealth is small, the benefit of selling the call option does not outweigh the loss that is incurred because the trader must trade in such a way as to avoid breaching the intrinsic wealth constraint. When the wealth is large, both the case with and without the semi-static hedge outperform the case where the trader does not sell the one-touch option, and in both cases the trader is able to exploit the premium received from selling the one-touch at a premium with minimal effect of the intrinsic wealth constraint. Notably, however, the case where the trader holds the semi-static position in vanilla Call options is much more impactful in the case where the trader's initial wealth is moderate. Moreoever, the initial wealth at which the trader is even able to implement the strategy is much lower in the semi-static case ($w_0 \approx 0.08$) than in the case without the static hedge ($w_0 \approx 0.18$). 

\begin{figure}
    \centering
    \includegraphics[width=\textwidth]{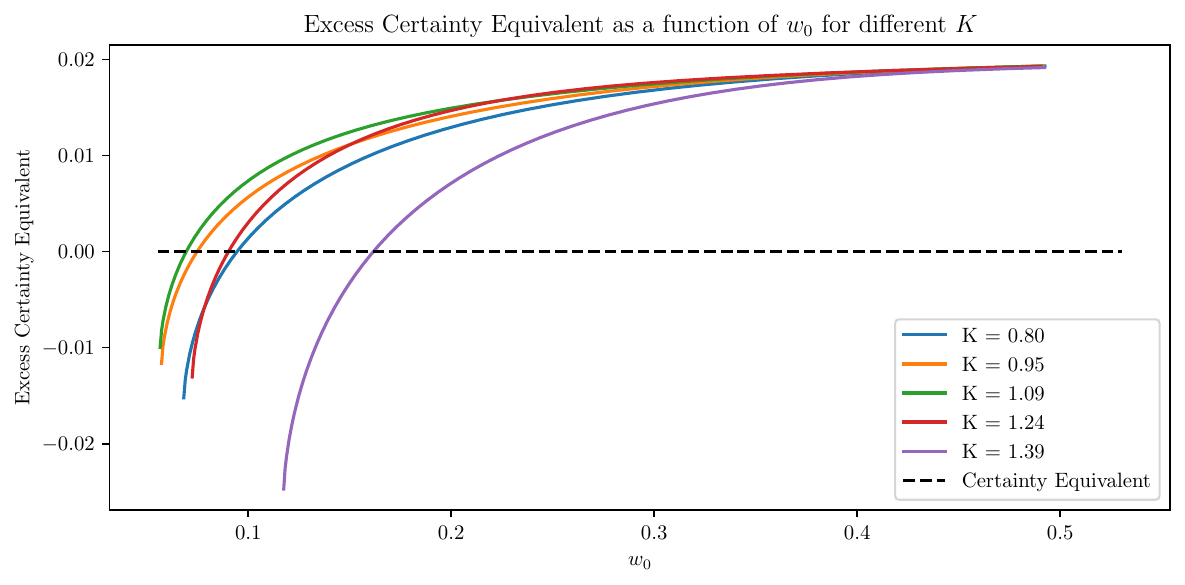}
    \caption{\label{fig:OT_UtilityPlot_K} The figure shows the expected utility of the trader as a function of the wealth $w_0$ of trader for different strikes. The utiliy is shown in terms of the difference between the certainty equivalent of the utility with and without the hedge. The plots are shown for a range of stikes $K$ for the vanilla Call options used in the semi-static hedge. }
\end{figure}

\begin{figure}
    \centering
    \includegraphics[width=\textwidth]{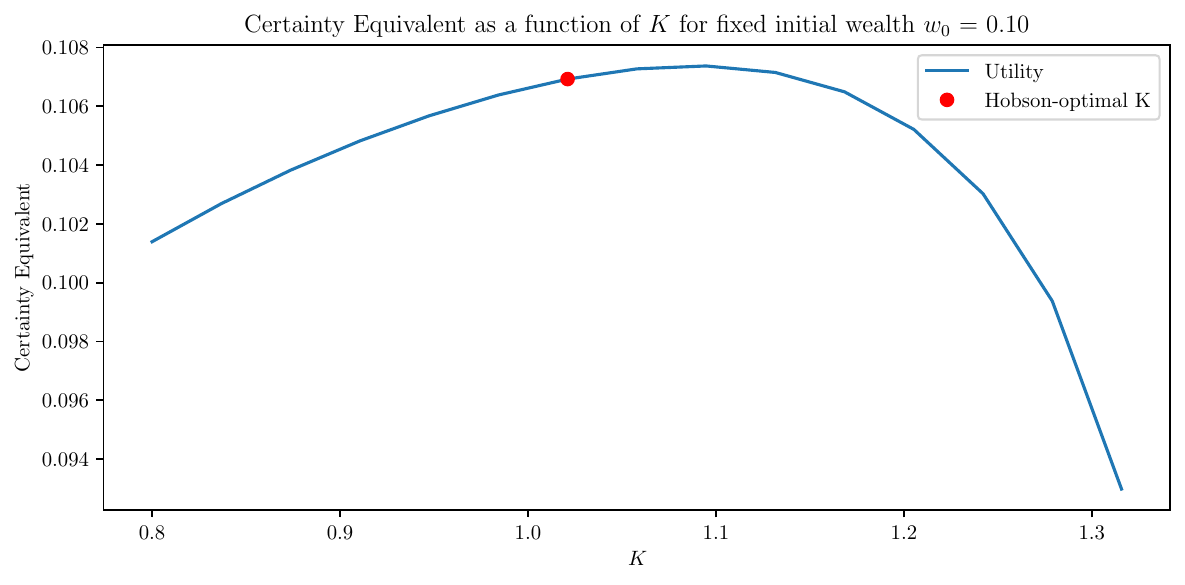}
    \caption{\label{fig:OT_CertaintyEquivalent_vs_K} The plot shows the certainty equivalent of the trader when holding the semi-static hedge as a function of the strike $K$ of the vanilla Call options used in the hedge. Highlighted is the choice of semi-static hedge which corresponds to the minimal-cost superhedge ('Hobson-optimal $K$'). The initial wealth is $w_0=0.1$. }
\end{figure}

We can also examine the impact of different choices of the parameter $K$. Figure~\ref{fig:OT_UtilityPlot_K} shows the expected utility of the trader as a function of initial wealth for a range of different strikes $K$ used in the semi-static hedge. We observe that there is no uniformly best choice of $K$. However there is some evidence in Figure~\ref{fig:OT_CertaintyEquivalent_vs_K} that suggests the optimal choice of $K$ is not exactly the `Hobson-optimal' strike, but this choice is close to optimal.

\section{Summary and Future Work}

In this paper we consider the problem of optimal investment in a portfolio which combines dynamic trading and a static position in options. The novelty in our work comes from a trading constraint which is based on the intrinsic, or worst case value of the option. This setting allows us to develop a framework for hedging which sits between classical and robust settings for option pricing. We are able to develop explicit characterisations of the optimal trading strategy in certain cases, and we are able to see in simple examples how the optimal strategy finds a balance between the desire to maximise expected profit, and the risk associated with extreme positions relative to the trader's capacity to sustain short-term mark-to-market losses.

Notably, we see in our numerical results that even in a complete market setting, the presence of the intrinsic wealth constraint can have a significant impact on the optimal strategy. In particular, in the case of the one-touch option, we are able to see the benefit of holding semi-static hedging positions in vanilla options, even when these options are priced at their replication cost. This suggests that even in complete market settings, the presence of intrinsic wealth constraints could justify  the use of semi-static hedging strategies.

While our explicit results make fairly strong assumptions (for example, complete markets), our framework is flexible, and future work to understand the impact of considering a larger class of possible hedging models (e.g. moving to an incomplete market setting), or allowing for uncertain volatility models, for example, would be interesting to understand.

\bigskip

\filbreak

\noindent
{\bf Acknowledgements:}

\noindent
AC and DHH acknowledge the support of the Royal Society, through the Newton International Fellowship scheme NI160069.  The second author is very grateful for the hospitality of the University of Bath.

\appendix

\section{Results on Max-Plus Representations}

In this appendix we prove some results on Max-plus representations for c\'adl\`ag supermartingales under specific assumptions on the structure of the supermartingale.

We suppose specifically that the c\'adl\`ag supermartingale $(X_t)_{t \in [0,T]}$ can be written as a martingale up to a specific exit time. That is, we suppose there exists a maximal family of increasing stopping times $\{\tau_t; t\in[0,T]\}$ such that $\tau_t \in [t,T]$, $\tau_t \le \tau_s$ for $t \le s$, 
\begin{equation} \label{eq:TauCond}
X_t = \Eq{X_{\tau_t} | \Fc_t},
\end{equation}
and such that 
\begin{equation} \label{eq:TauCond2}
  X_t > \Eq{X_{\sigma} | \Fc_t}, \text{ for all stopping times } \sigma \ge \tau_t \text{ with } \P(\sigma > \tau_t) > 0. 
\end{equation} 
In canonical cases, we may consider $\tau_t$ to be the first hitting time after time $t$ by a process to a specific region, for example. We note that the case where $\tau_t = t$ is not excluded, although in the following result it will imply a very specific structure for $X$.

Then we have the following result.

\begin{lemma}\label{lemma:A1}
  Suppose there exists a family of increasing stopping times $\{\tau_t; t\in[0,T]\}$ such that $\tau_t \in [t,T]$ \eqref{eq:TauCond} and \eqref{eq:TauCond2} hold, and a decreasing function $\varphi:[0,T] \to \R$ such that if $\tau_t = s$ for some $t$, then
  \begin{equation*}
    X_s \ge \varphi(s) \ge X_u, \quad \text{ for all } u > s.
  \end{equation*}
  Then $(X_t)_{t \in [0,T]}$ has Max-plus representation
  \begin{equation*}
    X_t = \Eq{\sup_{t \le u \le T} J_u | \Fc_t},
  \end{equation*}
  where
  \begin{equation*}
    J_u = 
    \begin{cases}
      X_{\tau_u}, & u= \tau_u < T, \\
      X_T, & u = T, \\
      -\infty, & \text{ otherwise}.
    \end{cases}
  \end{equation*}
  Moreover, the smallest (in convex increasing order) martingale dominating $(X_t)_{t \in [0,T]}$ is given by
  \begin{equation*}
    M_t = \Eq{\sup_{0 \le u \le T} J_u | \Fc_t},
  \end{equation*}
  and for any constant $c \in \R$, the smallest (in convex increasing order) martingale dominating $(X_t \vee c)_{t \in [0,T]}$ is given by
  \begin{equation*}
    M_t^c = \Eq{\left(\sup_{0 \le u \le T} J_u\right) \vee c | \Fc_t},
  \end{equation*}
  that is, the Max-plus representation is given by $J_u^c = J_u \vee c$, or equivalently
  \begin{equation*}
    J_u = 
    \begin{cases}
      X_{\tau_u}, & u= \tau_u < T, X_{\tau_u} \ge c \\
      X_T \vee c, & u = T, \\
      -\infty, & \text{ otherwise}.
    \end{cases}
  \end{equation*}
\end{lemma}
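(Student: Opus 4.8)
The plan is to prove the three assertions in sequence, with essentially all of the genuinely new work concentrated in the explicit identification of the max-plus process $J$; the two minimality statements will then be inherited from the general max-plus decomposition theory of \cite{El-Karoui:2006aa,El-Karoui:2008aa}. First I would reduce the representation $X_t = \Eq{\sup_{t \le u \le T} J_u \mid \Fc_t}$ to the single pathwise identity
\[
 \Lambda_t := \sup_{t \le u \le T} J_u = X_{\tau_t},
\]
since \eqref{eq:TauCond} then yields $\Eq{\Lambda_t \mid \Fc_t} = \Eq{X_{\tau_t}\mid \Fc_t} = X_t$ immediately. As a preliminary I would record the idempotence $\tau_{\tau_t} = \tau_t$ a.s.: if $\P(\tau_{\tau_t} > \tau_t) > 0$, then combining \eqref{eq:TauCond} applied at time $\tau_t$ with \eqref{eq:TauCond2} applied at time $t$ and $\sigma = \tau_{\tau_t}$ gives $X_t > \Eq{X_{\tau_{\tau_t}}\mid\Fc_t} = \Eq{X_{\tau_t}\mid\Fc_t} = X_t$, a contradiction. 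Consequently every random time $\tau_t$ satisfies $u = \tau_u$, so that $J_{\tau_t} = X_{\tau_t}$ whenever $\tau_t < T$, and $J_{\tau_t} = X_T = X_{\tau_t}$ on $\{\tau_t = T\}$.

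The core step is the computation of $\Lambda_t$. Here $J_u \neq -\infty$ only when $u = \tau_u < T$ (where $J_u = X_u$) or $u = T$ (where $J_u = X_T$), and $\tau_t$ is the first such \emph{active} time at or after $t$. On $\{\tau_t = T\}$ the only active time in $[t,T]$ is $T$, so $\Lambda_t = X_T = X_{\tau_t}$. On $\{\tau_t < T\}$, write $s = \tau_t$; there are no active times in $[t,s)$, while the sandwich hypothesis supplies $X_s \ge \varphi(s)$ together with $\varphi(s) \ge X_u$ for every $u > s$. Hence every active contribution strictly after $s$ — namely $X_u$ at exit times $u \in (s,T)$ and the terminal value $X_T$ — is bounded above by $\varphi(s) \le X_s$, so the supremum is attained at the first exit time and $\Lambda_t = X_s = X_{\tau_t}$. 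This is exactly the point at which both halves of the inequality $X_s \ge \varphi(s) \ge X_u$ and the monotonicity of $\varphi$ are needed. Combining the two cases proves the representation.

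For the minimality claims I would invoke the max-plus decomposition theory. The process $M_t = \Eq{\sup_{0 \le u \le T} J_u \mid \Fc_t}$ is a martingale (a conditional expectation of a fixed integrable variable, integrability following from the class-$(\mathcal D)$ and structural hypotheses), and $M_t \ge \Eq{\sup_{t \le u \le T} J_u \mid \Fc_t} = X_t$ by monotonicity of the supremum, so $M$ dominates $X$; that it is the smallest dominating martingale in the convex increasing order is precisely the optimality statement of the max-plus decomposition in \cite{El-Karoui:2006aa,El-Karoui:2008aa}. For the floored version, conditional Jensen applied to $x \mapsto x \vee c$ gives $M_t^c = \Eq{(\sup_{0\le u \le T} J_u)\vee c \mid \Fc_t} \ge M_t \vee c \ge X_t \vee c$, so $M^c$ dominates $X \vee c$, and applying the same theory to the Snell envelope of $(X_t \vee c)_{t}$ identifies $J \vee c$ as its max-plus process. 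The equivalent explicit form stated for $J^c$ then follows because the constant floor $c$ is always supplied at the terminal time, so discarding the exit-time contributions lying below $c$ does not alter the running supremum.

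The main obstacle is the collapse $\Lambda_t = X_{\tau_t}$: it hinges on ensuring that every contribution of $J$ strictly after the first exit time, the terminal value included, is dominated by the value at that first exit, which is exactly what the two-sided bound $X_s \ge \varphi(s) \ge X_u$ together with $\varphi$ decreasing delivers. The genuinely deep input — optimality in the convex increasing order — is not re-proved but imported from the cited results; likewise the optionality and integrability of the running supremum $\sup_{t \le u \le T} J_u$ rely on the upper-right-semicontinuity and class-$(\mathcal D)$ assumptions carried over from the main text.
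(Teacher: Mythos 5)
Your proposal is correct and follows essentially the same route as the paper: both arguments reduce the representation to the pathwise identity $\sup_{t \le u \le T} J_u = X_{\tau_t}$, established via the sandwich bound $X_s \ge \varphi(s) \ge X_u$ at $s = \tau_t$ (your explicit idempotence step $\tau_{\tau_t} = \tau_t$ plays the role of the paper's "maximality" remark identifying the active times), and both then defer the convex-order minimality statements, including the floored version, to the general max-plus machinery of El Karoui--Meziou.
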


Observe that in the trivial case where $\tau_t = t$, then the conditions of the lemma imply that $X$ is a decreasing process, and hence the Max-plus representation is trivially given by $J_u = X_u$ for $u \le T$.

\begin{proof}
  First observe that defining 
  \begin{equation*}
    \tilde{J}_u = 
    \begin{cases}
      X_{u}, & u= \tau_s < T, \text{ some } s \le u, \\
      X_T, & u = T, \\
      -\infty, & \text{ otherwise}.
    \end{cases}
  \end{equation*}
  then
  \begin{equation*}
    X_t = \Eq{\sup_{t \le u \le T} \tilde{J}_u | \Fc_t},
  \end{equation*}
  since if $s = \tau_t$ for some $t \le s$ then for $u \ge s$ we have $X_u \le \varphi(s) \le X_s$, and hence $X_s = \Eq{X_{\tau_s}| \Fc_s} \ge \Eq{X_u | \Fc_s}$ with equality if and only if $X_{\tau_s} = X_u = X_s$ almost surely. In particular, $\sup_{t \le u \le T} \tilde{J}_u = X_s$. On the other hand, if $s \neq \tau_t$ for some $t \le s$, then in particular $s < \tau_s$, and $X_s = \Eq{X_{\tau_s} | \Fc_s} = \Eq{\tilde{J}_{\tau_s} | \Fc_s} \ge \Eq{\sup_{\tau_s \le u \le T} \tilde{J}_u | \Fc_s}$. Moreover, by the maximality of $\tau_t$, if there exists $t < s$ such that $s < \tau_t$ with positive probability, then $\tau_t = \tau_s$ on $\{s < \tau_t\}$ since $X_s = \Eq{X_{\tau_s \wedge \tau_t}|\Fc_s} = \Eq{X_{\tau_t}|\Fc_s}$ on this set.

  The rest of the proof now follows immediately from the fact that $X_t = J_t = J^c_t$ whenever $J_t > c$, and properties of max-plus martingales.

    
\end{proof}

To help us in the verification of some results, we will make use of the following result.
\begin{lemma}\label{lem:MaxPlusLemma}
  Let $(X_t)_{t\in[0,T]}$ be a $\Q$-supermartingale with c\`adl\`ag paths. 
  \begin{enumerate}
    \item Suppose $X = Y + Z$ where $Y$ and $Z$ are both also c\`adl\`ag $\Q$-supermartingales such that $Y_T, Z_T \ge 0$. Suppose in addition that there exists a stopping time $\tau \le T$ and an $\Fc_{\tau}$-measurable set $A$ such that:
    \begin{enumerate}
      \item $X_{t \wedge \tau}$ is a $\Q$-martingale;
      \item $Y_t = 0$ on $A$ for all $t \ge \tau$;
      \item $Z_t = 0$ on $A^c$ for all $t \ge \tau$. 
    \end{enumerate}
    Then $X_t = \Eq{\sup_{t \le u \le T} J_u | \Fc_t}$, where
    \begin{align*}
      J_u & = J_u^Y + J_u^Z,
    \end{align*}
    and $J_u^Y$ and $J_u^Z$ are the Max-plus representations of $Y$ and $Z$ respectively.

    \item Let $(X_t)_{t\in[0,T]}$ be a $\Q$-supermartingale with Max-plus representation $J_u^X$, and suppose $\overline{\Q}$ is an equivalent probability measure to $\Q$ with Radon-Nikodym derivative $\frac{\di \overline{\Q}}{\di \Q} = M_T$, where $(M_t)_{t\in[0,T]}$ is a strictly positive $\Q$-martingale.
    
    Suppose in addition that there exists a maximal family of stopping times $\{\tau_t; t\in[0,T]\}$ such that $\tau_t \in [t,T]$ for each $t$, $\tau_t \le \tau_s$ when $t \le s$ and $X_t = \Eq{J_{\tau_t}^X | \Fc_t}$. If in addition
    \begin{align*}
      J_{\tau_t}^X M^{-1}_{\tau_t} & \ge J_{u}^X M^{-1}_{u}, \quad u \in [t,T], 
    \end{align*}
    then $(XM^{-1})_{t\in[0,T]}$ is a $\overline{\Q}$-supermartingale with Max-plus representation $J_u^{\overline{X}} = \frac{1}{M_u} J_u^X$, i.e.
    \begin{align*}
      X_t & =  M_t \Ebq{\sup_{t \le u \le T} \left(J_u^{X} M_u^{-1}\right) | \Fc_t}.
    \end{align*}
  \end{enumerate}
\end{lemma}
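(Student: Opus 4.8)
The plan is to handle the two parts separately. In each case the claimed Max-plus identity reduces to a statement about \emph{where} the running supremum of the relevant index process is attained, and the hypotheses are used precisely to pin down that location; part (ii) is essentially a clean change-of-measure computation, while part (i) is where the real work lies.

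For part (i) I would start from the representations $Y_t = \Eq{\sup_{t\le u\le T} J^Y_u \mid \Fc_t}$ and $Z_t = \Eq{\sup_{t\le u\le T} J^Z_u \mid \Fc_t}$ supplied by Lemma~\ref{lemma:A1}, so that $X_t = Y_t+Z_t = \Eq{\sup_{t\le u\le T} J^Y_u + \sup_{t\le u\le T} J^Z_u \mid \Fc_t}$. The target $X_t = \Eq{\sup_{t\le u\le T}(J^Y_u+J^Z_u)\mid \Fc_t}$ then follows from the pointwise identity $\sup_{t\le u\le T}(J^Y_u+J^Z_u) = \sup_{t\le u\le T} J^Y_u + \sup_{t\le u\le T} J^Z_u$, for which the inequality $\le$ is automatic by subadditivity of the supremum and only $\ge$ carries content. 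The key structural observation is that the vanishing hypotheses force the two indices to be active on complementary sets after $\tau$: since $Y_t=0$ on $A$ for $t\ge\tau$ and $Y_T\ge 0$, one obtains $\sup_{\tau\le u\le T} J^Y_u = 0$ on $A$, and because $Y\equiv 0$ there its index may be taken identically $0$ (rather than $-\infty$) on $A$ past $\tau$; symmetrically $J^Z_u=0$ for $u\ge \tau$ on $A^c$. On $A$ this collapses the supremum of the sum to that of $J^Z$ while $\sup J^Y=0$, giving equality, and symmetrically on $A^c$, provided $t\ge \tau$.

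The remaining regime $t<\tau$ is the step I expect to be the main obstacle, since I must rule out any spurious contribution to the running maxima from the interval $[t,\tau)$, where both indices may have active points and the complementary-support argument does not directly apply. Here I would invoke hypothesis (a), that $X_{\cdot\wedge\tau}$ is a $\Q$-martingale: a martingale does not decrease, so in the Lemma~\ref{lemma:A1} picture it contributes nothing to the running supremum before $\tau$, i.e. $\sup_{t\le u\le T}J_u$ is attained at or after $\tau$. I would make this precise by arguing that the maximal stopping family associated with $X$ satisfies $\tau_t\ge\tau$ on $\{t<\tau\}$, thereby reducing the computation to the already-resolved post-$\tau$ case, and then conclude by combining the two regimes. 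Establishing this reduction rigorously, together with the compatible choice of the individual representations $J^Y,J^Z$, is the delicate point.

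For part (ii) the argument is direct. First, $XM^{-1}$ is a $\bQ$-supermartingale by the standard density computation $\Ebq{X_s M_s^{-1}\mid\Fc_t} = M_t^{-1}\Eq{X_s\mid\Fc_t}\le X_t M_t^{-1}$ for $s\ge t$. The crucial point is that the hypothesis $J^X_{\tau_t} M^{-1}_{\tau_t}\ge J^X_u M^{-1}_u$ for $u\in[t,T]$ says exactly that the running supremum of the rescaled index over $[t,T]$ is attained at $\tau_t$, so $\sup_{t\le u\le T}\big(J^X_u M^{-1}_u\big) = J^X_{\tau_t} M^{-1}_{\tau_t}$. It then remains to evaluate $M_t\,\Ebq{J^X_{\tau_t} M^{-1}_{\tau_t}\mid\Fc_t}$; by the abstract Bayes rule this equals $\Eq{M_T M^{-1}_{\tau_t} J^X_{\tau_t}\mid\Fc_t}$, and since $M^{-1}_{\tau_t} J^X_{\tau_t}$ is $\Fc_{\tau_t}$-measurable, conditioning first on $\Fc_{\tau_t}$ and applying optional sampling to the $\Q$-martingale $M$ (valid as $\tau_t\le T$) yields $\Eq{M_T\mid\Fc_{\tau_t}}=M_{\tau_t}$, which collapses the expression to $\Eq{J^X_{\tau_t}\mid\Fc_t}=X_t$, the last equality being the assumed martingale-up-to-$\tau_t$ property. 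Dividing through by $M_t$ gives the stated representation, and the fact that the $\bQ$-supremum is attained at $\tau_t$ simultaneously certifies that $M_\cdot^{-1} J^X_\cdot$ is the Max-plus index of $XM^{-1}$ under $\bQ$.
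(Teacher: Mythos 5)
Your proposal follows essentially the same route as the paper in both parts: part (i) is reduced to the pointwise identity $\sup_{t\le u\le T}(J^Y_u+J^Z_u)=\sup_{t\le u\le T}J^Y_u+\sup_{t\le u\le T}J^Z_u$ via the complementary-support condition after $\tau$ together with the martingale property before $\tau$, and part (ii) is the same Bayes-rule computation using that the rescaled index is maximised at $\tau_t$. The one step you flag as delicate is closed in the paper in a single line: since $Y$ and $Z$ are supermartingales whose sum is a martingale up to $\tau$, each of $Y_{\cdot\wedge\tau}$ and $Z_{\cdot\wedge\tau}$ is itself a martingale, whence $J^Y_u\le\sup_{\tau\le v\le T}J^Y_v$ almost surely for $u\le\tau$ (and similarly for $J^Z$), so each \emph{individual} running supremum is attained on $[\tau,T]$ and the post-$\tau$ argument applies.
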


\begin{proof}
\begin{enumerate}
  \item First observe since $Y$ and $Z$ are non-negative, we have $J_T^Y, J_T^Z \ge 0$, and hence $J_T = J_T^Y + J_T^Z \ge 0$. Further, it follows from the fact that $X_{t \wedge \tau}$ is a $\Q$-martingale that both $Y_{t \wedge \tau}$ and $Z_{t \wedge \tau}$ are also $\Q$-martingales. Therefore, for $t \le \tau$, we have almost surely
  \begin{align*}
    J_u^Y & \le \sup_{\tau \le v \le T} J_v^Y, \quad u \in [0,\tau],
  \end{align*} 
  and similarly for $J_u^Z$. In addition, on $A$, for $u \ge \tau$ we have $J_u^Z = 0$, and similarly on $A^c$ for $J_u^Y$. Hence
  \begin{align*}
    \sup_{t \le u \le T} (J_u^Y + J_u^Z) & = \sup_{t \vee \tau \le u \le T} (J_u^Y + J_u^Z) \\
    & =
    \begin{cases}
      \sup_{t \vee \tau \le u \le T} J_u^Y, & \text{ on } A, \\
      \sup_{t \vee \tau \le u \le T} J_u^Z, & \text{ on } A^c,
    \end{cases}
    \\
    & =
      \sup_{\tau \vee t \le u \le T} J_u^Y + \sup_{\tau \vee t \le u \le T} J_u^Z,\\
    & = \sup_{t \le u \le T} J_u^Y + \sup_{t \le u \le T} J_u^Z.
  \end{align*}  

  \item Since $X_t = \Eq{J_{\tau_t}^X | \Fc_t}$ we have 
  \begin{align*}
    M_t^{-1} X_t & = M_t^{-1} \Eq{J_{\tau_t}^X | \Fc_t} \\
    & = \Ebq{J_{\tau_t}^X M_{T}^{-1} | \Fc_t} \\
    & = \Ebq{J_{\tau_t}^X M_{\tau_t}^{-1} | \Fc_t}\\
    & = \Ebq{\sup_{t \le u \le T} (J_u^X M_u^{-1}) | \Fc_t},
  \end{align*}
  where in the second last step we have used the fact that $M^{-1}$ is
  a $\bQ$-martingale, and in the final step we have used the
  assumption that for all $u \in [t,T]$,
  $J_{\tau_t}^X M^{-1}_{\tau_t} \ge J_{u}^X M^{-1}_{u}$.
\end{enumerate}

\end{proof}









\bibliographystyle{plain}
\bibliography{UtilityPlus}

\end{document}